\documentclass[lettersize,journal]{IEEEtran}
\usepackage{amsmath,amsfonts}
\usepackage{algorithmic}
\usepackage{algorithm}
\usepackage{array}
\usepackage[caption=false,font=footnotesize,labelfont=rm,textfont=rm]{subfig}
\usepackage{textcomp}
\usepackage{stfloats}
\usepackage{url}
\usepackage{verbatim}
\usepackage{graphicx}
\usepackage{cite}
\usepackage{color}
\usepackage[colorlinks=true, allcolors=blue]{hyperref}

\let\oldeqref\eqref 
\renewcommand{\eqref}[1]{\textcolor{blue}{\oldeqref{#1}}} 
\usepackage{amssymb}
\usepackage{makecell}
\usepackage{amsthm}
\usepackage{bm}
\usepackage{balance}

\newtheoremstyle{mystyle}
{0pt}
{0pt}
{\normalfont}
{\parindent}
{\normalfont\itshape}
{:}
{0.5em}
{\thmname{#1}\thmnumber{\itshape\space #2}\thmnote{\space (#3)}}
\theoremstyle{mystyle}

\newtheorem{corollary}{Corollary}

\newtheorem{lemma}{Lemma}
\newtheorem{remark}{Remark}

\makeatletter
\renewenvironment{proof}[1][\proofname]{\par
	\pushQED{\qed}%
	\normalfont
	\topsep0\p@\relax
	\trivlist
	\item[\hskip\labelsep
	\hspace{0.65em}
	\normalfont\itshape
	#1\@addpunct{:}]
	\normalfont\ignorespaces
}{
	\popQED
	\endtrivlist
	\@endpefalse
}
\makeatother

\makeatother
\usepackage{multirow}
\usepackage{diagbox}
\usepackage{threeparttable}
\newcommand{\algorithmicparameter}{\textbf{ Setup:}}
\newcommand{\PARAMETER}{\item[\algorithmicparameter]}

\renewcommand{\algorithmicparameter}{\textbf{Setup:}}

\usepackage[caption=false,font=footnotesize,labelfont=rm,textfont=rm]{subfig}
\begin{document}
	\title{LoRa Fluid Antenna Multiple Access}

	\author{
	Gaoze Mu,~\IEEEmembership{Member,~IEEE,} Yanzhao Hou,~\IEEEmembership{Member,~IEEE,} Peichang Zhang,~\IEEEmembership{Member,~IEEE,} Mingjie~Chen, Siyuan~Li, Qimei~Cui, \IEEEmembership{Senior Member,~IEEE,} Xiaofeng Tao, \IEEEmembership{Senior Member,~IEEE}
	
	\thanks{(\textit{Corresponding author: Yanzhao Hou).}}
	
	\thanks{Gaoze Mu, Mingjie Chen, and Siyuan Li are with the National Engineering Research Center for Mobile Network Technologies, Beijing University of Posts and Telecommunications, Beijing 100876, China (e-mail: \{mugz; chenmingjie; lisiyuan\}@bupt.edu.cn).}
	
	\thanks{Yanzhao Hou, Qimei Cui, and Xiaofeng Tao are with the National Engineering Research Center for Mobile Network Technologies, Beijing University of Posts and Telecommunications, Beijing 100876, China, and also with the Department of Broadband Communication, Peng Cheng Laboratory, Shenzhen 518055, China (e-mail: \{houyanzhao; cuiqimei; taoxf\}@bupt.edu.cn).}

	\thanks{Peichang Zhang are with the State Key
	Laboratory of Radio Frequency Heterogeneous Integration, Shenzhen University, Shenzhen 518060, China (e-mail: pzhang@szu.edu.cn).}

}

\markboth{}
{Shell \MakeLowercase{\textit{et al.}}: A Sample Article Using IEEEtran.cls for IEEE Journals}


\maketitle
\begin{abstract}
	Concurrent long-range (LoRa) transmissions over the same time-frequency and spreading factor (SF) resources generally result in packet collisions, as the gateway cannot distinguish the overlapping signals from different end devices (EDs). This paper advocates a new fluid antenna multiple access (FAMA) framework for LoRa, referred to as {\it lora}-FAMA, to provide spatial opportunities for LoRa multiuser communications. In {\it lora}-FAMA, a gateway employs a single fluid antenna connected to only one radio-frequency (RF) chain, whose radiating element traverses the antenna aperture by sequentially visiting all candidate positions, i.e., fluid antenna `ports', within each symbol interval. The signal segments collected along the trajectory are compensated using the channel state information for the target ED. As a result, the desired signal is coherently accumulated, whereas the signals from other EDs experience unmatched channel variations and thus cannot be coherently combined. Applying the compensation separately to each active ED enables simultaneous multiuser transmission. We analyze the statistical performance of {\it lora}-FAMA under asynchronous transmissions and spatially correlated fading, as well as the large-aperture limiting case with independent and identically distributed fading. Numerical results show close agreement between the analytical and Monte Carlo results. It is revealed that, with a normalized aperture of $4\times4$ and $\mathrm{SF}=9$, a gateway can simultaneously serve more than $10$ EDs over the same frequency and SF resources while maintaining a symbol error rate below $10^{-4}$. These results demonstrate the potential of fluid antennas to enable LoRa multiple access without multiple RF chains.
\end{abstract}	
\begin{IEEEkeywords}
	Fluid antenna multiple access (FAMA), long-range (LoRa), spatial correlation, symbol error rate (SER).
\end{IEEEkeywords}
\section{Introduction}
Recently, with the rapid development of the Internet-of-Things (IoT), there has been a strong demand for low-power wide-area networks (LPWANs) to support massive connectivity for low-cost, low-complexity end devices (EDs) \cite{bibitem1-1}. Amongst the existing LPWAN technologies, e.g., LTE-M, Sigfox, narrowband (NB)-IoT, and long-range (LoRa), LoRa has been widely adopted owing to its long transmission range, capability of supporting massive connectivity, and flexible network architecture \cite{bibitem1-2}. This can be attributed to its unique chirp-spread-spectrum (CSS) modulation at the physical layer (PHY) and the ALOHA protocol at the medium access control (MAC) layer \cite{bibitem1-3,bibitem1-4}. Specifically, CSS modulation with matched dechirping enables effective energy accumulation over the symbol duration and bandwidth, allowing detection at negative SNRs in decibels (dB), thereby supporting long-distance communications \cite{bibitem1-5}. At the MAC layer, the ALOHA-based random-access protocol enables a gateway to connect thousands of intermittently active and uncoordinated EDs to share a common medium without centralized scheduling \cite{bibitem1-6}. Despite these advantages, both CSS and ALOHA pose challenges to supporting simultaneous transmissions from multiple EDs, i.e., multiple access. The CSS signals with the same spreading factor (SF) and time-frequency resource are superimposed and overlapped during demodulation, making the signals from different EDs indistinguishable \cite{bibitem1-7,bibitem1-8}. Under the ALOHA mechanism, such simultaneous packet arrivals are regarded as a collision, in which only the stronger packet can be decoded due to the capture effect, while the other packets are lost\cite{bibitem1-9,bibitem1-10}. To further enhance connectivity and multiplexity, additional changes are required in the current LoRa networks.
\subsection{Related Works on CSS Modulation}
Enhancing MAC-layer throughput through PHY techniques is a common approach \cite{bibitem1-11}. To enhance the multiuser or multi-packet capability of LoRa, most works focus on new waveform design on the conventional CSS scheme. As summarized in \cite{bibitem1-12,bibitem1-13}, existing LoRa-like waveform designs can be classified into multi-parameter CSS, signal-set extension, chirp multiplexing, index modulation, and hybrid CSS. Specifically, multi-parameter CSS exploits additional chirp parameters to convey information \cite{bibitem1-14,bibitem1-15}, whereas signal-set extension enlarges the set of available chirp waveforms \cite{bibitem1-16,bibitem1-17}. Chirp multiplexing transmits multiple chirps within one symbol \cite{bibitem1-18}, while index modulation maps information onto the indices or activation patterns of transmission entities \cite{bibitem1-19}. Hybrid CSS combines two or more of these mechanisms \cite{bibitem1-20,bibitem1-21}. However, the resulting SE gains are generally accompanied by increased detection complexity, reduced robustness to synchronization impairments, or degraded symbol error ratio (SER) and energy efficiency. Additionally, the achievable SE of these schemes is generally only a few times that of conventional LoRa \cite{bibitem1-13}. More importantly, these SE-oriented waveform designs do not inherently increase the number of users that can be simultaneously supported. Few studies have investigated multiuser transmission in LoRa systems, which rely on complex successive interference cancellation (SIC) receivers \cite{bibitem1-22,bibitem1-23}, as well as differences in SFs \cite {bibitem1-24}, time offsets \cite{bibitem1-8}, and others. Note that existing SE-enhancement and multiuser schemes mainly reorganize finite waveform, time-frequency, and SF resources, making interference and collisions unavoidable as the number of streams or users grows. To further enhance the multiplexing, the spatial domain offers a promising additional degree of freedom (DoF).
\subsection{Related Works on FAMA}
Beyond providing spatial diversity gains, fluid antenna systems (FAS) \cite{bibitem1-25,bibitem1-26}, also known as movable antennas \cite{bibitem1-27,bibitem1-28}, can enable multiuser access through the fluid antenna multiple access (FAMA) technique \cite{bibitem1-29}. The initial FAMA framework, referred to as {\it fast}-FAMA \cite{bibitem1-29,bibitem1-30}, proposed by Wong {\it et al.}, allows the fluid antenna to select the position, i.e., ‘port’, within its aperture that maximizes the instantaneous signal-to-interference ratio (SIR) or signal-to-interference-plus-noise ratio (SINR), thereby mitigating multiuser interference. Then, as a practical alternative, {\it slow}-FAMA performs position switching only when the channel state information (CSI) changes, with reduced multiplexing capability \cite{bibitem1-31}. By selecting favorable users, opportunistic FAMA exploits multiuser diversity to achieve a higher multiplexing gain. \cite{bibitem1-32}. The compact ultra-massive array (CUMA) is another extension of FAMA \cite{bibitem1-33}, which coherently selects and combines the received signals from multiple ports. Note that FAS and FAMA can also be integrated into several emerging techniques, including joint source-channel coding (JSCC) \cite{bibitem1-34}, and other multiplex schemes such as non-orthogonal multiple access (NOMA) \cite{bibitem1-35,bibitem1-36} or fifth-generation (5G) New Radio (NR) \cite{bibitem1-37}. However, existing FAMA architectures are not directly suited to LoRa random access. First, since the port with maximum SINR is generally different for each user, the number of required fluid antennas scales with the number of active EDs, which varies as LoRa EDs are randomly activated. Second, each fluid antenna requires the per-port desired-signal and aggregate-interference information of its associated ED. These limitations motivate the development of a new FAMA framework for LoRa multiple access.
\subsection{Motivation and Contributions}
Motivated by the spatial potential of fluid antennas and the success of FAMA, we believe FAMA offers a promising avenue to enhance the multiuser capability of current LoRa networks. However, existing FAMA frameworks cannot be directly integrated into LoRa networks due to the aforementioned challenges. The opportunity arises from observations of LoRa signals, which show that when the channel varies continuously, correlation-based detection, also known as matched filtering, at the receiver can no longer demodulate the signal correctly. Accordingly, we develop a new multiple-access framework, referred to as {\it lora}-FAMA. Different from the conventional selection-based FAS or FAMA \cite{bibitem1-25,bibitem1-29}, the fluid antenna in {\it lora}-FAMA does not remain at a fixed port but continuously traverses, or `flows' across, the aperture within each symbol interval.\footnote{Unlike FAMA and CUMA, {\it lora}-FAMA can traverse ports along a predetermined path without online optimization. For example, the radiating element may move from one end to the other in the one-dimensional (1D) linear FAS, or sequentially visit the ports of a two-dimensional (2D) aperture.} At the receiver, the symbol segments collected from different ports are compensated using the CSI of each ED. Consequently, the desired-signal energy is coherently accumulated, whereas the interfering signals experience unmatched channel variations and therefore cannot be demodulated. The proposed multiple-access scheme will be detailed in the following sections, and its effectiveness will be demonstrated through statistical performance analysis and Monte Carlo simulations.
The contribution of this paper can be summarized as follows.
\begin{itemize}
\item We propose a FAMA framework for LoRa networks, referred to as {\it lora}-FAMA, that enables simultaneous multiuser transmissions over the same time-frequency and SF resources with a single fluid antenna connected to a single RF chain.

\item We derive statistical characterizations of the LoRa decision variables for asynchronous transmissions under spatially correlated fading with a relatively small antenna aperture and independent and spatially independent fading with a sufficiently large antenna aperture.

\item Since the SER is determined by the relationship between the desired decision value and the maximum undesired decision value, we first characterize the desired decision variable from its statistical moments. The undesired decision variables are generally not independent, their maximum is difficult to characterize directly. Given that the SER is primarily governed by its upper-tail behavior, we establish upper-tail asymptotic independence among different undesired bins and then approximate the maximum using extreme value theory (EVT).

\item To account for dependence among decision variables across different candidate symbols, we develop a copula-based analysis framework and derive a theoretical expression for the system SER.

\item Numerical results validate the accuracy of the  analytical expression and demonstrate the multiuser capability of {\it lora}-FAMA while maintaining a low SER. In particular, with a normalized aperture of $4\times4$ and $\mathrm{SF}=9$, more than $10$ EDs can be simultaneously supported with an SER below $10^{-4}$.
\end{itemize}
\section{Network Model}
\subsection{Practical LoRa Networks}
In LoRa networks, the CSS-based PHY scheme together with the ALOHA-based MAC mechanism forms a basic communication framework. Under CSS modulation, symbols using the same SF share the same spectral characteristics and cannot be distinguished at the receiver. As a result, when multiple symbols, or equivalently multiple packets, are transmitted simultaneously, the receiver cannot reliably separate them, leading to packet collisions\cite{bibitem1-5,bibitem2-1}. To address this and support a large number of IoT EDs with sporadic transmission demands, ALOHA employs a random-access scheme in which EDs wake up only when data is generated and return to sleep after packet delivery \cite{bibitem2-2}. This process relies on a sufficiently low duty cycle to maintain stable network operation \cite{bibitem2-3,bibitem2-4}. Nevertheless, as the number of EDs grows, the aggregate duty cycle increases, leading to more severe packet collisions. Moreover, because EDs are randomly activated and transmit without coordination, the corresponding signals generally arrive at the gateway asynchronously. In practice, the gateway detects the arrival of each packet using the preamble and then demodulates the payload directly, without additional synchronization or handshaking between the ED and the gateway.  

Finally, uplink typically dominates in practical LoRa networks due to the asymmetric nature of IoT services. Downlink transmissions are primarily used for ED parameter configuration and control and are therefore triggered only at specific time instants. To support downlink reception, LoRa networks define three operating modes, namely Class A, B, and C \cite{bibitem2-4,bibitem2-5}. In Class A, which is the most widely adopted mode, each ED opens two short receive windows immediately after an uplink transmission. In Class B, the ED periodically opens additional receive windows in accordance with gateway-scheduled ping slots. In Class C, the ED remains almost continuously in the receive state and performs persistent preamble detection. Due to stringent power constraints on the ED side, most practical deployments operate in Class A. Finally, despite the availability of the downlink classes above, LoRa EDs typically operate with unconfirmed uplink transmissions. Therefore, packet collisions generally result in the loss of the corresponding data. Besides, it is generally difficult to establish synchronization between EDs and the gateway through the downlink link.

\begin{table}[t]
	\centering
	\caption{Summary of key notations}
	\begin{tabular}{m{1.3cm}<{\centering}||l}
		\hline
		Notation & \makecell[c]{Meaning} \\ \hline
		\multirow{2}{*}{$K$} 
		& Number of transmit-symbol indices, discrete-time \\ 
		& sample points, and DFT bins \\ \hline
		$N_0$ 
		& Single side noise power spectral density \\ \hline
		\multirow{2}{*}{$N_1, N_2$} 
		& Numbers of FAS ports along the horizontal\\
		& and vertical dimensions \\ \hline
		$N$ 
		& Total number of FAS ports \\ \hline
		\multirow{2}{*}{$W_1, W_2$} 
		& Normalized FAS dimensions along the horizontal\\
		& and vertical directions \\ \hline
		$W$ 
		& FAS aperture \\ \hline
		$a_{j}$ 
		& Transmit symbol index of the $j$-th ED \\ \hline
		$n$ 
		& Index of the discrete-time sample point \\ \hline
		$l$ 
		& Index of the FAS port \\ \hline
		$k$ 
		& Index of the DFT bin \\ \hline
		$\tau$ 
		& Arrival-time offset of the reference ED \\ \hline
		${d}_{j}$ 
		& Relative timing offset of the $j$-th interfering ED \\ \hline
		\multirow{2}{*}{$\mathcal{K}$} 
		& Set of transmit-symbol indices, discrete-time sample\\ 
		& points, and DFT bins \\ \hline
		$\mathcal{U}$ 
		& Set of ED indices \\ \hline
		$\mathcal{L}$ 
		& Set of FAS port indices \\ \hline
		$\mathcal{S}_{l\mid\tau}$
		& Set of sample indices received through $l$-th port \\ \hline
		\multirow{2}{*}{$\mathcal D_{j,\mathrm{p}}, \mathcal D_{j,\mathrm{q}}$} 
		& Sets of sample indices corresponding to the tail \\
		& and head fragments of the $j$-th interfering ED\\ \hline
	\end{tabular}
	\label{table_key_variables}
\end{table}

\begin{remark}
Based on the above discussion, we aim to design a multiple-access scheme for LoRa uplink transmission to scale up ED deployments. In this work, we do not model the random activation behavior of EDs, as the MAC layer governs it. Instead, we consider the PHY scenario in which multiple EDs are already active and transmit simultaneously to the gateway. We consider randomly arriving packets, with no explicit synchronization between the EDs and the gateway. Since the LoRa transmission rate is fixed for a given parameter configuration, the multiplexing capability of the proposed scheme is not quantified by the multiplexing gain. Instead, we evaluate it based on SER. In particular, a set of simultaneously active EDs is regarded as supportable if the SER achieved by each ED is below a target threshold, e.g., $10^{-3}$ or $10^{-4}$.
\end{remark}
\subsection{{\it lora}-FAMA Scheme}
We consider an uplink LoRa network in which a gateway simultaneously receives transmissions from $U$ active EDs. In conventional LoRa systems, concurrent transmissions often result in severe packet collisions, preventing the gateway from separating the superposed signals and leading to decoding failure. To end this, we propose a {\it lora}-FAMA scheme in which the gateway employs a single FAS and a single RF chain. The proposed receiver requires no modification at the EDs. The main notation is summarized in Table~\ref{table_key_variables}.

\begin{figure}[!t]
	\centering
	\includegraphics[width=1\linewidth]{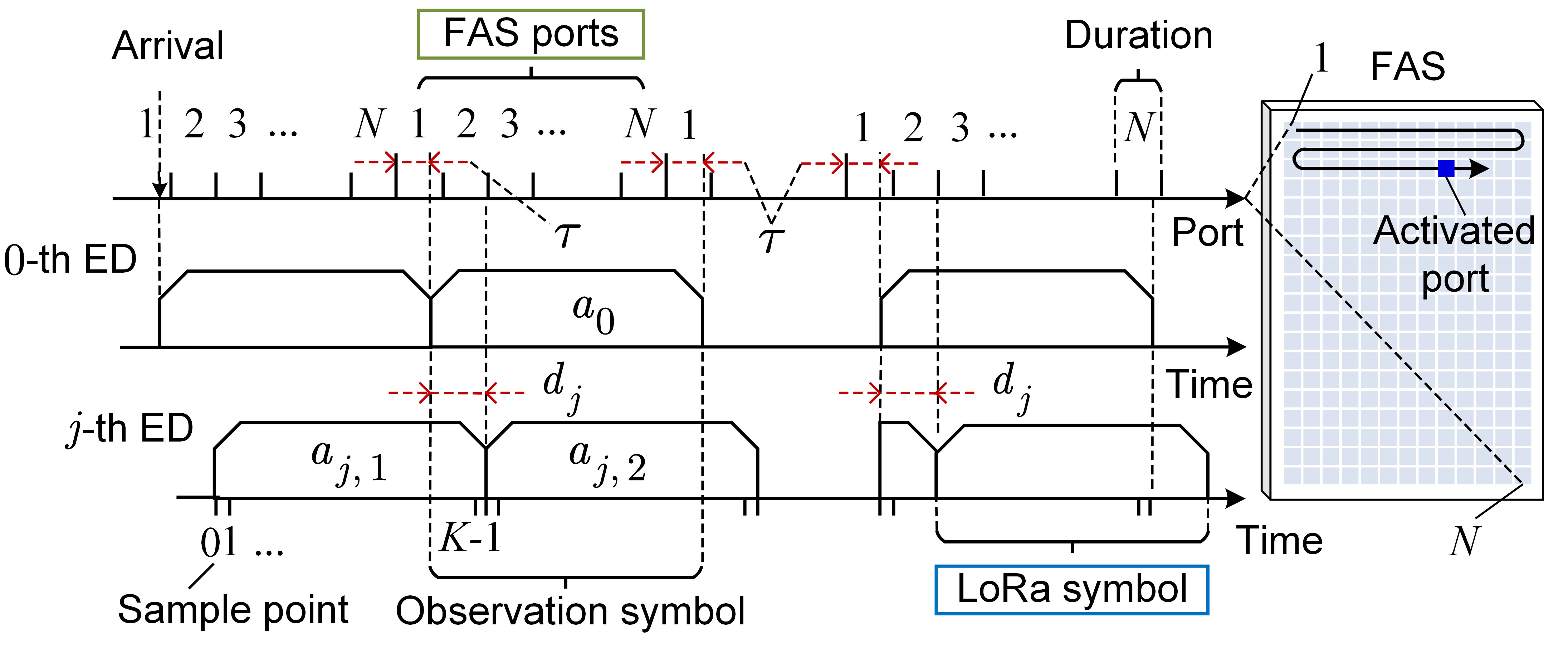}%
	\caption{Schematic diagram of the proposed symbol structure.}
	\label{symbol}
\end{figure}

For $SF\in\{7,8,\ldots,12\}$, the LoRa alphabet contains $K=2^{SF}$ symbols. With the standard sampling rate $f_{\mathrm{s}}=B$, each LoRa symbol is represented by $K$ discrete-time samples over one symbol interval $T_{\mathrm{s}}=K/B$. Accordingly, the discrete-time signal transmitted by the $j$-th ED is \cite{bibitem2-7}
\begin{equation}\label{xm}
	x_{a_{j}}[n]
	=
	\sqrt{\frac{1}{K}}
	\exp\!\left[
	j2\pi\left(
	\frac{n^2}{2K}
	+
	\left(\frac{a_{j}}{K}-\frac{1}{2}\right)n
	\right)
	\right],\:n\in\mathcal K,
\end{equation}
where $\mathcal K\triangleq\{0,1,\ldots,K-1\}$, $j
\in\mathcal U\triangleq\{0,1,\ldots,U-1\}$, and $a_{j}\in\mathcal K$ denotes the transmitted symbol index. The basic LoRa signal corresponds to $a_{j}=0$.

The gateway is equipped with a 2D FAS of physical size $W_1\lambda\times W_2\lambda$, where $\lambda$ denotes the carrier wavelength.\footnote{For the considered scenario, a CUMA-based 2D FAS implemented with MEMS pixel antennas is more suitable than fluid material-based FAS implementations, since electronically switchable antenna pixels can provide fast position switching without rapid physical movement of the radiating element \cite{bibitem1-33}.} The FAS aperture is uniformly discretized into $N=N_1N_2$ ports. Let $l\in\mathcal L\triangleq\{1,2,\ldots,N\}$ denote the one-dimensional port index. The corresponding horizontal and vertical indices are given by \cite{bibitem1-33}
\begin{equation}
	n_1 = ((l-1)\bmod N_1)+1,\:
	n_2 = \left\lfloor\frac{l-1}{N_1}\right\rfloor+1.
\end{equation}

Under rich scattering, the channel coefficient between the $j$-th ED and port $l$ is denoted by $h_{j}[l]$ and modeled as $h_{j}[l]\sim\mathcal{CN}(0,\Omega)$, where $\Omega=\mathbb{E}[|h_{j}[l]|^2]$. For two ports $u=\mathrm{map}(n_1,n_2)$ and $v=\mathrm{map}(n_3,n_4)$, the spatial correlation is given by\cite{bibitem2-8}
\begin{equation}\label{jake-model}
	\begin{aligned}
		&\mathbb{E}\!\left[h_{j}[u]\bigl(h_{j}[v]\bigr)^*\right]=\Omega\left[\mathbf{R}\right]_{u,v} \\
		&\quad =
		\Omega J_0\!\left(
		2\pi
		\sqrt{
			\left(\frac{(n_1-n_3)W_1}{N_1-1}\right)^2
			+
			\left(\frac{(n_2-n_4)W_2}{N_2-1}\right)^2
		}
		\right),
	\end{aligned}
\end{equation}
where $J_0(\cdot)$ denotes the zeroth-order Bessel function of the first kind and $\mathbf{R}$ is the correlation matrix of ports. Let $\mathbf{h}_{j}=\left[h_{j}[1],h_{j}[2],\cdots,h_{j}[N]\right]^{\mathrm T}$, we have $\mathbf h_{j}\sim\mathcal{CN}(\mathbf{0},\Omega\mathbf R)$.

In contrast to conventional {\it fast}-FAMA and {\it slow}-FAMA, where the RF chain remains on a selected port during each symbol interval, the proposed {\it lora}-FAMA scheme sequentially traverses all $N$ ports within one LoRa symbol, as illustrated in Fig.~\ref{symbol}. This operation is feasible because the LoRa symbol duration is substantially longer than the switching time of typical MEMS RF switches.\footnote{In LoRa deployments, $B=125~\mathrm{kHz}$ with $SF=7$ to $12$ yields a symbol duration from $1.024~\mathrm{ms}$ to $32.768~\mathrm{ms}$, while the switching time of MEMS RF switches is commonly ranging from a few to several tens of microseconds ($\mu{s}$)\cite{bibitem2-9}.} For analytical tractability, we assume that $N$ divides $K$ and define $Q\triangleq K/N$. Then, the effective channel of the $j$-th ED at sample $n$ is
\begin{equation}\label{group-h}
	g_{j}[n] = h_{j}[l],\quad (l-1)Q \le n \le lQ-1.
\end{equation}

Without loss of generality, for the reference ED, the port corresponding to its symbol-arrival instant is indexed by $l=1$. Since the desired signal may arrive at any instant while the RF chain dwells on a given port, we denote the corresponding timing offset by $\tau$. For tractability, only integer-valued offsets are considered, and ${\tau}$ is assumed to be uniformly distributed over $\{0,1,\ldots,Q-1\}$. Given $\tau$, the resulting channel sequence over one observation window is
\begin{equation}
	g_{j\mid\tau}[n] = g_{j}[(n+\tau)\bmod K],\: n\in\mathcal K.
\end{equation}
To explicitly characterize the port support within the current observation
window, define the set of sample indices received through port $l$ as $\mathcal S_{l\mid\tau}$, which is given by
\begin{subequations}
	\begin{align}
		\mathcal S_{1\mid\tau}
		&=
		\{0,\cdots,Q-\tau-1\}\cup\{K-\tau,\cdots,K-1\},\\
		\mathcal S_{l\mid\tau}
		&=
		\{(l-1)Q-\tau,\cdots,lQ-\tau-1\},\: l\neq 1,
	\end{align}
\end{subequations}
where $\tau\in\{0,1,\cdots,Q_1\}$ and $Q_1=Q-1$. Note that only the
first-port is split into two disjoint segments.

For notation simplicity, the $0$-th ED is selected as the reference.\footnote{The same detection procedure is performed for every active ED, with each ED can be treated as the reference.} Accordingly, the desired-signal component is
\begin{subequations}
	\begin{align}
		r_{0}[n]&= g_{0\mid\tau}[n]\sqrt{E_{\mathrm{s}}}\,x_{a_{0}}[n],\:n\in\mathcal K,\\
		&= h_{0}[l]\sqrt{E_{\mathrm{s}}}\,x_{a_{0}}[n],\:n\in\mathcal{S}_{\l\mid\tau},
	\end{align}
\end{subequations}
where $E_{\mathrm{s}}$ denotes the symbol energy, and $a_{0}$ is the transmit symbol for $0$-th ED.

For the $j$-th interfering ED. Let ${d}_{j}$ denote the integer-valued timing offset relative to the reference observation window, where ${d}_{j}$ is uniformly distributed over $\mathcal K$. Over one symbol window, the observed interference generally contains the tail of one LoRa symbol and the head of the next. Let $a_{j,\mathrm{p}}$ and $a_{j,\mathrm{q}}$ denote the corresponding symbol indices. Besides, denote $
\boldsymbol{\upsilon}_{j}=\big(a_{j,\mathrm{p}},a_{j,\mathrm{q}},{d}_{j}\big)\in\mathcal{K}^3$. All discrete variables are independently and uniformly distributed over their domains. Then, the interfering waveform is expressed as
\begin{equation}\label{xj_piece}
	\begin{aligned}
		x_{\boldsymbol{\upsilon}_{j}}[n]
		=
		\begin{cases}
			x_{a_{j,\mathrm{p}}}[K-{{d}_{j}}+n], & n\in\mathcal D_{j,\mathrm{p}},\\
			x_{a_{j,\mathrm{q}}}[n-{{d}_{j}}], & n\in\mathcal D_{j,\mathrm{q}}.
		\end{cases}
	\end{aligned}
\end{equation}
where $\mathcal{D}_{j,\mathrm{p}}=\big\{0,1,\ldots,{d}_{j}-1\big\}$ and $\mathcal{D}_{j,\mathrm{q}}=\big\{{d}_{j},{d}_{j}+1,\ldots,K-1\big\}$. Here, the subscripts `$\mathrm{p}$' and `$\mathrm{q}$' denote the \emph{tail} and \emph{head} fragments observed in the current observation window, respectively. Hence, the overall received signal at the gateway is given by\footnote{Since the observation window is only defined by the arrival instant of the reference ED and its corresponding FAS port, the same timing offset $\tau$ applies to the other EDs as well.}
\begin{equation}
	r[n]=r_{0}[n]+\sum_{\substack{j=1}}^{U_1}
	g_{j\mid\tau}[n]\sqrt{E_{\mathrm{s}}}\,x_{\boldsymbol{\upsilon}_{j}}[n]+z[n],
\end{equation}
where $U_1=U-1$ and the additive noise is modeled as $z[n]\sim\mathcal{CN}(0,N_0)$, where $N_0$ is the noise power spectral density. Since $T_{\mathrm{s}}=K/B$, the received SNR can be written as\cite{bibitem2-10}
\begin{equation}
	\Gamma\triangleq \frac{E_{\mathrm{s}}}{T_{\mathrm{s}}N_0B}
	= \frac{E_{\mathrm{s}}}{N_0K}.
\end{equation}
For simplicity, we set $E_\mathrm{s}=1$, such that different SNR values are obtained by varying $N_0$.

Assuming that full CSI is available at the gateway, coherent detection for the $0$-th ED is performed by normalizing, channel-conjugate combining, dechirping with the basic LoRa signal, and a $K$-point DFT. Specifically, given $\tau$, the DFT output is obtained as
\begin{subequations}\label{wk_norm}
	\begin{align}
		{w}_{0}[k]
		&=
		\operatorname{DFT}\left\{
		{\bigl(g_{0\mid\tau}[n]\bigr)^*}
		r[n]
		\bigl(x_0[n]\bigr)^*
		\right\} \\
		&=
		\sum_{n=0}^{K-1}
		\bigl(g_{0\mid\tau}[n]\bigr)^*
		r[n]
		\bigl(x_0[n]\bigr)^*
		\exp\!\left(-j2\pi\frac{kn}{K}\right) \\
		&=
		\sum_{n=0}^{K_1}
		\bigl(g_{0\mid\tau}[n]\bigr)^*
		r[n]\bigl(x_k[n]\bigr)^*,\:k\in\mathcal{K},
	\end{align}
\end{subequations}
where $K_1=K-1$ and $k$ is known as the frequency bin. Moreover, $x_k[n]$ follows from \eqref{xm} and $k$ is known as the index of the DFT bin. The real-valued decision metric is then utilized. Finally, the detected symbol is obtained as $\hat{a}_{0} = \arg\max\limits_{k\in\mathcal K} \Re\left\{w_{0}[k]\right\}$. A symbol error occurs when $\hat{a}_{0}\neq a_{0}$. Therefore, the SER is

\begin{subequations}\label{ser}
\begin{align}
	P_{\mathrm{s}}
	&=\Pr\Big(\max_{\substack{k\in\mathcal K,\:k\neq{a}_{0}}}
	\Re\left\{w_{0}[k]\right\}>\Re\left\{w_{0}\big[a_{0}\big]\right\}\Big)\\
	&\triangleq\Pr\Big(\max_{\substack{k\in\mathcal K,\:k\neq a_{0}}}\varpi_{0}[k]>\varpi_{0}\big[a_{0}\big]\Big)
\end{align}
\end{subequations}

The proposed {\it lora}-FAMA scheme constructs a time-varying spatial matched filter with only a single RF chain, where the power of interference is dispersed over the DFT bins after coherent processing to avoid generating a correlation peak. When the RF chain remains connected to a single port throughout the entire symbol interval, the proposed receiver reduces to a conventional LoRa receiver, in which interference produces concentrated peaks in the DFT output.

\section{Performance Analysis}
In this section, we aim to evaluate the performance of the proposed {\it lora}-FAMA network. For tractability, we first decompose the real part of the DFT output in \eqref{wk_norm} as
\begin{equation}\label{w-2}
	\begin{split}
		\varpi_{0}[k]=\Re&\Bigg\{\frac{\sqrt{E_\mathrm{s}}}{K}\sum_{n=0}^{K_1}
		G_{0,0\mid\tau}[n]\exp\left(j2\pi\frac{(a_{0}-k)n}{K}\right)\\
		&+\sqrt{E_\mathrm{s}}\sum_{j=1}^{U_1}\sum_{n=0}^{K_1}G_{0,j\mid\tau}[n]x_{\boldsymbol{\upsilon}_{j}}[n]\big(x_k[n]\big)^*\\
		&+\sum_{n=0}^{K_1}\big(g_{0\mid\tau}[n]\big)^*z[n]\big(x_k[n]\big)^*\Bigg\}\\
		=\varpi_{\mathrm{s},0}&[k]+\underbrace{\sum_{j=1}^{U_1}\varpi_{\mathrm{I},0,j}[k]}_{\varpi_{\mathrm{I},0}[k]}
		+\varpi_{\mathrm{n},0}[k].
	\end{split}
\end{equation}
and $G_{0,j\mid\tau}[n]=\big(g_{0\mid\tau}[n]\big)^*g_{j\mid\tau}[n]$ is the product of two complex Gaussian channel coefficients. Then, the exact cumulative distribution function (CDF) of decision values at the $k$-th bin can be obtained as 
\begin{figure*}[!t]
	\normalsize
	\begin{equation}\label{cdf-exc}
		\begin{split}
			F_{\varpi_{0}[k]}(x)={}&\frac{1}{QK^{3U_{1}}}\sum_{\tau=0}^{Q_1}\sum_{\boldsymbol{\upsilon}_{1}\in\mathcal{K}^{3}}\cdots\sum_{\boldsymbol{\upsilon}_{U_{1}}\in\mathcal{K}^{3}}\int_{\mathbb{C}^{N}}\Phi\Bigg(\frac{x-\sqrt{E_{\mathrm{s}}}\mathbf{h}^{\mathrm{H}}\mathbf{C}_{k\mid\tau}\mathbf{h}}{\sqrt{\frac{1}{2}\mathbf{h}^{\mathrm{H}}\big(\frac{N_{0}}{N}\mathbf{I}_{N}+E_{\mathrm{s}}\Omega\sum_{j=1}^{U_{1}}\mathbf{D}_{k\mid\tau,\boldsymbol{\upsilon}_{j}}\mathbf{R}\mathbf{D}_{k\mid\tau,\boldsymbol{\upsilon}_{j}}^{\mathrm{H}}\big)\mathbf{h}}}\Bigg)f_{\mathbf{h}_0}(\mathbf{h})\mathrm{d}\mathbf{h},
		\end{split}
	\end{equation}
	\hrule
\end{figure*}
where $\mathbf{C}_{k\mid\tau}=\mathrm{diag}\big\{\sum_{n\in\mathcal{S}_{l\mid\tau}}\cos\left({2\pi(a_{0}-k)n}/{K}\right)/K\big\}_{l=1}^{N}$, $\mathbf{D}_{k\mid\tau,\boldsymbol{\upsilon}_{j}}=\mathrm{diag}\big\{\sum_{n\in\mathcal{S}_{l\mid\tau}}x_{\boldsymbol{\upsilon}_{j}}[n]\big(x_{k}[n]\big)^{*}\big\}_{l=1}^{N}$ and $f_{\mathbf{h_0}}(\mathbf{h})={\pi^{-N}{\exp\left(-\mathbf{h}^{\mathrm{H}}(\Omega\mathbf{R})^{-1}\mathbf{h}\right)}/\det(\Omega\mathbf{R})}$.

\begin{remark}
Since terms in \eqref{w-2} involve the common $g_{0}[n]$, which are correlated across different $n$ and also include a large number of combinations of discrete variables, the exact CDF in \eqref{cdf-exc} requires a $QK^{3U_1}$-term summation and a $N$-fold integral over the complex plane. Hence, direct numerical evaluation is computationally intensive. Therefore, we first derive approximate marginal distributions for the desired and undesired DFT bins, i.e., $\varpi_{0}[a_{0}]$ and $\varpi_{0}[k]$, $k\neq{a}_{0}$, with reduced computational complexity, and then employ a copula function to characterize the dependence between these two terms. Finally, the SER is obtained from the copula function. 
\end{remark}

\subsection{Desired-Bin Distribution}
An analytically tractable characterization of $\varpi_{0}[a_{0}]$ is obtained by approximating it with a Gamma-distributed random variable. The corresponding parameters are determined by matching its moments, which are derived next.
\begin{lemma}\label{l1}
The mean and variance of the desired-bin decision variable $\varpi_{0}[a_{0}]$ are given by
\begin{equation}\label{e_wi}
	\begin{split}
		\mathrm{E}\left[\varpi_{0}[a_{0}]\right]=\sqrt{E_{\mathrm{s}}}\Omega,
	\end{split}
\end{equation}
\begin{equation}\label{var_wi}
	\begin{split}
		\mathrm{Var}\left[\varpi_{0}[a_{0}]\right]=\frac{E_{\mathrm{s}}\Omega^{2}}{N^{2}}\mathrm{tr}\left(\mathbf{R}^{2}\right)+\frac{\Omega N_{0}}{2}+\frac{E_{\mathrm{s}}\Omega^{2}U_{1}}{2K},
	\end{split}
\end{equation}
respectively.
\end{lemma}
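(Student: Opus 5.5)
Write $\varpi_{0}[a_{0}]=\varpi_{\mathrm{s},0}[a_{0}]+\varpi_{\mathrm{I},0}[a_{0}]+\varpi_{\mathrm{n},0}[a_{0}]$ as in \eqref{w-2}. I will compute the moments of each term separately and show the three terms are mutually uncorrelated. The decorrelation uses the zero-mean, mutually independent quantities $z[n]$ and $\mathbf h_j$ ($j\ge1$), all independent of $\mathbf h_0$, so all cross-covariances vanish after conditioning on $\mathbf h_0$.

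\textbf{Desired term.} At $k=a_{0}$ the exponential is one, so
\[
\varpi_{\mathrm{s},0}[a_{0}]=\frac{\sqrt{E_{\mathrm s}}}{K}\sum_{n}|g_{0\mid\tau}[n]|^2=\frac{\sqrt{E_{\mathrm s}}}{N}\sum_{l=1}^{N}|h_0[l]|^2 ,
\]
since each port occupies exactly $Q=K/N$ samples for every $\tau$. This gives the mean $\sqrt{E_{\mathrm s}}\Omega$. The interference and noise terms have zero mean, so this is also \eqref{e_wi}. For the variance I write $\mathbf h_0=\sqrt{\Omega}\mathbf R^{1/2}\mathbf w$ and apply the standard circular Gaussian quadratic-form identity $\mathrm{Var}[\mathbf h^{\mathrm H}\mathbf h]=\Omega^2\mathrm{tr}(\mathbf R^2)$. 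This yields $E_{\mathrm s}\Omega^2\mathrm{tr}(\mathbf R^2)/N^2$.

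\textbf{Noise term.} Conditioned on $\mathbf h_0$, the sum $\sum_n g_0^*[n]z[n]x_{a_0}^*[n]$ is circular complex Gaussian with variance $\sum_n|g_0[n]|^2N_0/K$, because $|x_k[n]|^2=1/K$. Its real part therefore has conditional variance half of that. Averaging over $\mathbf h_0$ gives $K\Omega N_0/(2K)=\Omega N_0/2$.

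\textbf{Interference term.} Conditioned on $\mathbf h_0$, $\tau$ and $\boldsymbol\upsilon_j$, the term $\sum_n g_0^*[n]g_j[n]x_{\boldsymbol\upsilon_j}[n]x_{a_0}^*[n]$ equals $\mathbf h_j^{\mathrm T}\mathrm{diag}(\cdot)\mathbf h_0^*$. This is circular Gaussian in $\mathbf h_j$, with variance
\[
\Omega\,\mathbf h_0^{\mathrm H}\mathbf D\mathbf R\mathbf D^{\mathrm H}\mathbf h_0 ,
\]
where $\mathbf D=\mathbf D_{a_0\mid\tau,\boldsymbol\upsilon_j}$. The real part takes half. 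Taking the expectation over $\mathbf h_0$ gives $\tfrac{E_{\mathrm s}\Omega^2}{2}\mathrm{tr}(\mathbf D\mathbf R\mathbf D^{\mathrm H}\mathbf R)$.

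\textbf{Main obstacle.} The hard step is averaging this over $\boldsymbol\upsilon_j$ (uniform symbols and delay) to obtain exactly $1/K$ per interferer. I would expand
\[
\mathrm{tr}(\mathbf D\mathbf R\mathbf D^{\mathrm H}\mathbf R)=\sum_{l,m}D_lD_m^*|R_{lm}|^2 .
\]
The claim to establish is $\mathbb E_{\boldsymbol\upsilon}[D_lD_m^*]=\delta_{lm}Q/K^2$. To do this, write $D_l=\sum_{n\in\mathcal S_{l\mid\tau}}x_{\boldsymbol\upsilon}[n]x_{a_0}^*[n]$. Averaging over the uniform symbol index $a_{j,\mathrm p}$ or $a_{j,\mathrm q}$ multiplies the pair $(n,n')$ by a factor $\frac1K\sum_a e^{j2\pi a(n-n')/K}$, which is $\delta_{nn'}$ whenever $n$ and $n'$ lie in the same fragment. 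When $n$ and $n'$ lie in different fragments, the two independent symbol averages each vanish.

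Hence $\mathbb E[D_lD_m^*]=\sum_{n\in\mathcal S_l\cap\mathcal S_m}1/K^2=\delta_{lm}Q/K^2$, independent of $d_j$ and $\tau$. It follows that $\mathbb E\,\mathrm{tr}=N\cdot Q/K^2\cdot 1=1/K$, using $R_{ll}=1$. Summing over the $U_1$ independent interferers, whose contributions are uncorrelated since the $\mathbf h_j$ are independent and zero-mean, gives $E_{\mathrm s}\Omega^2U_1/(2K)$. Adding the three variances yields \eqref{var_wi}.
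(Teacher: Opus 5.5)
Your proposal is correct and follows the paper's proof essentially step for step. It uses the same three-term decomposition, cross-covariances eliminated by the zero-mean $\mathbf h_j$ and $z[n]$, the conditional interference variance $\tfrac{E_{\mathrm s}\Omega^2}{2}\mathrm{tr}(\mathbf D\mathbf R\mathbf D^{\mathrm H}\mathbf R)$, and the key averaging identity $\mathbb E_{\boldsymbol\upsilon}\bigl[D_lD_m^*\bigr]=\delta_{lm}/(NK)$; for that identity you supply the symbol-averaging justification the paper only asserts.

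One small correction concerns $n$ in the tail fragment and $n'$ in the head fragment. Only the $a_{j,\mathrm p}$-average is guaranteed to vanish there; the $a_{j,\mathrm q}$-average equals $1$ at $n'=d_j$. This is still enough, because the two averages factor.
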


\begin{proof}
See Appendix \ref{app:a}.
\end{proof}

\begin{remark}
Based on \eqref{e_wi} and \eqref{var_wi}, $\varpi_{0}\big[a_{0}\big]$ can be approximated as a Gamma random variable, $\varpi_{0}[a_{0}]\sim\mathrm{Gamma}(\alpha_0,\beta_0)$, where
\begin{equation}
	\alpha_0=\frac{\big(\mathrm{E}\left[\varpi_{0}[a_{0}]\right]\big)^2}{\mathrm{Var}\left[\varpi_{0}[a_{0}]\right]}
	\quad\mathrm{and}\quad
	\beta_0=\frac{\mathrm{Var}\left[\varpi_{0}\big[a_{0}\big]\right]}{\mathrm{E}\left[\varpi_{0}\big[a_{0}\big]\right]}.
\end{equation}
Note that for some special cases, the exact distribution can be obtained as follows.
\end{remark}

\begin{lemma}\label{l2}
For $N=K$ and sufficiently large $W$ such that $\mathbf{R}\to\mathbf{I}_{K}$, the CDF of $\varpi_{0}[a_{0}]$ is given by
\begin{equation}\label{a_i-x>0}
	\begin{split}
		F_{\varpi_{0}[a_{0}]}(x)=1-\exp\left(-\frac{x}{p}\right)\sum_{\kappa=0}^{K_1}\sum_{b=0}^{\kappa}A_{\kappa,b}\frac{x^{\kappa-b}}{p^{\kappa}q^{K}},
	\end{split}
\end{equation}
for $x\geq0$, whereas
\begin{equation}\label{a_i-x<0}
	\begin{split}
		F_{\varpi_{0}[a_{0}]}(x)=\exp\left(\frac{x}{q}\right)\sum_{\kappa=0}^{K_1}\sum_{b=0}^{\kappa}A_{\kappa,b}\frac{(-x)^{\kappa-b}}{p^{K}q^{\kappa}},
	\end{split}
\end{equation}
for $x<0$, where $A_{\kappa,b}=\binom{\kappa}{b}\Gamma(K+b)/\left(\kappa!\Gamma(K)\rho^{K+b}\right)$, $\rho=1/p+1/q$,
$p=(\sqrt{(\theta\Omega)^{2}+2\Omega v}+\theta\Omega)/2$, $q=(\sqrt{(\theta\Omega)^{2}+2\Omega v}-\theta\Omega)/2$, $\theta=\sqrt{E_{\mathrm{s}}}/K$, and $v=N_{0}/(2K)+E_{\mathrm{s}}\Omega U_{1}/(2K^{2})$.
\end{lemma}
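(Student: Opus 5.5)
With $N=K$ we have $Q=1$, so $\tau=0$ and each sample $n$ is received through its own port, $g_{0}[n]=h_{0}[n+1]$. In the limit $\mathbf{R}\to\mathbf{I}_K$ these $K$ gains are i.i.d. $\mathcal{CN}(0,\Omega)$. The plan is to write the desired bin as a sum of $K$ i.i.d. terms, one per port, and show that each term, conditioned on its own gain, is Gaussian with a mean and variance proportional to $|h|^2$. Each term is then a Gaussian mixture driven by an exponential variable, which has a closed-form characteristic function.

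\textbf{Step 1: per-port decomposition.} Setting $k=a_0$ in \eqref{w-2}, the signal part is
\[
\varpi_{\mathrm{s},0}[a_0]=\theta\sum_{n}|g_0[n]|^2,\qquad \theta=\sqrt{E_{\mathrm s}}/K .
\]
The noise part is $\Re\{\sum_n g_0^*[n] z[n] x_{a_0}^*[n]\}$. Conditioned on $\mathbf{h}_0$, it is Gaussian with variance $\sum_n |g_0[n]|^2 N_0/(2K)$, since $|x_{a_0}[n]|^2=1/K$.

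For the interference, each term has the form $g_0^*[n]g_j[n]x_{\boldsymbol\upsilon_j}[n]x_{a_0}^*[n]$. Conditioned on $g_0$ and $\boldsymbol\upsilon_j$, the $g_j[n]$ are independent across $n$ because $\mathbf{R}=\mathbf{I}$. Each term is then zero-mean circular Gaussian, and its real part has variance $|g_0[n]|^2\Omega E_{\mathrm s}/(2K^2)$. This conditional law does not depend on $\boldsymbol\upsilon_j$, so the discrete averaging over $\boldsymbol\upsilon_j$ disappears.

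Summing over $j$, the decision variable can be written as
\[
\varpi_0[a_0]=\sum_{n=0}^{K_1} Y_n,\qquad Y_n=\theta|g_n|^2+|g_n|\sqrt{v}\,Z_n ,
\]
with $v=N_0/(2K)+E_{\mathrm s}\Omega U_1/(2K^2)$. Here the $Z_n$ are i.i.d. $\mathcal N(0,1)$ and independent of the gains. The $K$ summands $Y_n$ are i.i.d.

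\textbf{Step 2: characteristic function of one summand.} Put $X=|g_n|^2\sim\mathrm{Exp}(1/\Omega)$. Conditioning on $X$ gives
\[
\mathrm{E}[e^{stY}]=\mathrm{E}\big[e^{X(\theta s+vs^2/2)}\big]=\frac{1}{1-\Omega\theta s-\Omega v s^2/2}.
\]
The denominator factors as $(1-ps)(1+qs)$, with $p-q=\theta\Omega$ and $pq=\Omega v/2$. Solving gives exactly the stated $p$ and $q$. Each $Y_n$ is therefore distributed as $p E_1 - q E_2$ with $E_1,E_2$ independent unit-mean exponentials, i.e. an asymmetric Laplace variable. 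It follows that
\[
\varpi_0[a_0]\overset{d}{=} p\,G_1-q\,G_2,\qquad G_1,G_2\ \text{i.i.d. } \mathrm{Gamma}(K,1).
\]

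\textbf{Step 3: invert the Gamma-difference law (main computational obstacle).} The density of $pG_1-qG_2$ is obtained by the convolution integral
\[
f(x)=\frac{1}{(pq)^K\Gamma(K)^2}\int_{\max(0,-x)}^{\infty} (x+y)^{K-1} y^{K-1} e^{-(x+y)/p-y/q}\,\mathrm dy .
\]
For $x\ge0$, I would expand $(x+y)^{K-1}$ binomially and integrate term by term against $e^{-\rho y}$. This produces factors $\Gamma(K+b)/\rho^{K+b}$.

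Rather than integrating the density, I would get the CDF more cleanly from the tail. Conditioning on $G_2$, the upper tail is
\[
\Pr(pG_1>x+qG_2)=e^{-(x+qG_2)/p}\sum_{\kappa=0}^{K_1}\frac{\big((x+qG_2)/p\big)^\kappa}{\kappa!}.
\]
Next, expand $(x+qG_2)^\kappa$ binomially with index $b$ and take the expectation over $G_2$. Each term involves
\[
\mathrm{E}\big[G_2^{b} e^{-qG_2/p}\big]=\frac{\Gamma(K+b)}{\Gamma(K)(1+q/p)^{K+b}} .
\]
Combined with the factor $q^{b}$, this simplifies to $\Gamma(K+b)/(\Gamma(K)q^K\rho^{K+b})$. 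This yields \eqref{a_i-x>0} with the stated $A_{\kappa,b}$.

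The case $x<0$ follows by symmetry: swap the roles of $p$ and $q$ and compute $\Pr(qG_2<-x+\dots)$, i.e. $F(x)=\Pr(qG_2-pG_1>-x)$. The same computation then gives \eqref{a_i-x<0}.

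The delicate points are the bookkeeping of powers of $p$, $q$ and $\rho$, and justifying that the $\boldsymbol\upsilon_j$-dependence vanishes. The latter holds because each interference term has unit-modulus phase times an independent circular Gaussian.
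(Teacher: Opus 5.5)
Your proposal is correct and follows essentially the same route as the paper. Conditioning on the desired-channel gains gives a Gaussian mixture whose characteristic function factors as $(1-jpt)^{-K}(1+jqt)^{-K}$; the paper conditions directly on the total $\zeta_0=\mathbf{h}_0^{\mathrm{H}}\mathbf{h}_0$ rather than port by port. The CDF is then obtained exactly as you do, by conditioning on the negative Gamma component and using the Erlang tail with a binomial expansion.

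Two slips are cosmetic. $\mathrm{E}[e^{stY}]$ should read $\mathrm{E}[e^{sY}]$. For $x<0$ the relevant event is $\{qG_2 \geq pG_1 - x\}$, after which swapping $p$ and $q$ gives the second formula.
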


\begin{proof}
See Appendix \ref{app:b}.
\end{proof}

\subsection{Undesired-Bin Distribution}

For the undesired bins, where $k\neq{a_{0}}$, the decision metric is generated by the maximum of $K_1$ bins. Therefore, a Gamma approximation based on low-order moments is inadequate for characterizing this extreme value, as the SER is dominated by its upper-tail behavior. In addition, for analytical tractability, we retain the interference and noise components, which dominate the undesired bins, and neglect the comparatively weak leakage of the desired signal.

Here, we first derive the CF for the values in the undesired bins. Then, the undesired bins are proved to be asymptotically independent in the upper tail. Based on this, the maximum value across the bins can be derived using EVT, with two specific quantiles determined by the initial CF.

\begin{remark}
This avoids both the $K_1$-th power operation involved in the extreme-value distribution and the inversion of the distribution over the entire range of the decision variable. 
\end{remark}

Let $X_{k}=\varpi_{\mathrm{I},0}[k]+\varpi_{\mathrm{n},0}[k]$ for $k\neq{a}_{0}$.

\begin{lemma}\label{l3}
The CF of $X_{k}$ can be approximated as
\begin{equation}\label{cf-X}
	\begin{split}
		\Psi_{X_{k}}&(t)\approx\frac{1}{Q}\det\left(\mathbf{M}_{\mathrm{n}}(t)\right)^{-1}\\
		&\sum_{\tau=0}^{Q_{1}}\Bigg[\frac{1}{K^{3}}\sum_{\boldsymbol{\upsilon}\in\mathcal{K}^{3}}
		\det\left(\mathbf{I}_{N}+\frac{t^{2}}{4}\widetilde{\mathbf{B}}_{k\mid\tau,\boldsymbol{\upsilon}}(t)\right)^{-1}\Bigg]^{U_{1}},
	\end{split}
\end{equation}
where $\widetilde{\mathbf{B}}_{k\mid\tau,\boldsymbol{\upsilon}}(t)=\mathbf{M}_{\mathrm{n}}^{-\frac{1}{2}}(t)\mathbf{B}_{k\mid\tau,\boldsymbol{\upsilon}}\mathbf{M}_{\mathrm{n}}^{-\frac{1}{2}}(t)$, $\mathbf{M}_{\mathrm{n}}(t)=\mathbf{I}_{N}+t^{2}N_{0}\Omega\mathbf{R}/(4N)$, and $\mathbf{B}_{k\mid\tau,\boldsymbol{\upsilon}}=E_{\mathrm{s}}\Omega^{2}\mathbf{R}^{\frac{1}{2}}\mathbf{D}_{k\mid\tau,\boldsymbol{\upsilon}}\mathbf{R}\mathbf{D}_{k\mid\tau,\boldsymbol{\upsilon}}^{\mathrm{H}}\mathbf{R}^{\frac{1}{2}}$.
Moreover, $\boldsymbol{\upsilon}=(a_{\mathrm{p}},a_{\mathrm{q}},d)\in\mathcal{K}^{3}$ denotes an arbitrary combination of the two symbol indices and the relative timing offset of an interfering ED.
\end{lemma}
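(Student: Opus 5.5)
Conditioning on the reference channel is the key step. Given $\mathbf{h}_0=\mathbf{h}$, the timing offset $\tau$, and the discrete parameters $\boldsymbol{\upsilon}_1,\ldots,\boldsymbol{\upsilon}_{U_1}$, the quantity $X_k$ is a real Gaussian variable. Indeed, $\varpi_{\mathrm{n},0}[k]=\Re\{\sum_n g_{0\mid\tau}^*[n]z[n]x_k^*[n]\}$ is linear in the independent noise $z[n]$. Likewise, each $\varpi_{\mathrm{I},0,j}[k]=\Re\{\sqrt{E_{\mathrm{s}}}\,\mathbf{h}^{\mathrm{H}}\mathbf{D}_{k\mid\tau,\boldsymbol{\upsilon}_j}\mathbf{h}_j\}$ is linear in the independent Gaussian vector $\mathbf{h}_j\sim\mathcal{CN}(\mathbf{0},\Omega\mathbf{R})$, after grouping the sum over $n$ port by port through $\mathcal{S}_{l\mid\tau}$. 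This is the same structure that produced the denominator in \eqref{cdf-exc}.

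The conditional variances follow directly. Using $|x_k[n]|^2=1/K$ and $QK^{-1}=1/N$, the noise term contributes $\frac{1}{2}\frac{N_0}{N}\mathbf{h}^{\mathrm{H}}\mathbf{h}$. Interferer $j$ contributes $\frac{1}{2}E_{\mathrm{s}}\Omega\,\mathbf{h}^{\mathrm{H}}\mathbf{D}\mathbf{R}\mathbf{D}^{\mathrm{H}}\mathbf{h}$. The conditional CF is therefore
\[
\exp\Big(-\tfrac{t^2}{4}\mathbf{h}^{\mathrm{H}}\big(\tfrac{N_0}{N}\mathbf{I}_N+E_{\mathrm{s}}\Omega\textstyle\sum_j\mathbf{D}_j\mathbf{R}\mathbf{D}_j^{\mathrm{H}}\big)\mathbf{h}\Big).
\]
This exponent is a Hermitian quadratic form in $\mathbf{h}$.

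Next, average over $\mathbf{h}_0\sim\mathcal{CN}(\mathbf{0},\Omega\mathbf{R})$. Write $\mathbf{h}=\sqrt{\Omega}\mathbf{R}^{1/2}\mathbf{u}$ with $\mathbf{u}\sim\mathcal{CN}(\mathbf{0},\mathbf{I}_N)$. The Gaussian identity $\mathbb{E}[\exp(-\mathbf{u}^{\mathrm{H}}\mathbf{A}\mathbf{u})]=\det(\mathbf{I}_N+\mathbf{A})^{-1}$ then gives
\[
\det\Big(\mathbf{M}_{\mathrm{n}}(t)+\tfrac{t^2}{4}\textstyle\sum_j\mathbf{B}_{k\mid\tau,\boldsymbol{\upsilon}_j}\Big)^{-1}.
\]
Here $\mathbf{B}$ is exactly the matrix defined in the lemma. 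Factor out $\mathbf{M}_{\mathrm{n}}^{1/2}$ on both sides, which is valid because $\mathbf{M}_{\mathrm{n}}$ commutes with $\mathbf{R}$ and is positive definite. This yields
\[
\det(\mathbf{M}_{\mathrm{n}})^{-1}\det\big(\mathbf{I}_N+\tfrac{t^2}{4}\textstyle\sum_j\widetilde{\mathbf{B}}_j\big)^{-1}.
\]
So far the expression is exact. The remaining step is to average uniformly over $\tau$ and over the $K^{3U_1}$ combinations of $\boldsymbol{\upsilon}_j$.

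The main obstacle, and the reason the lemma only claims an approximation, is that the determinant of a sum does not factor over $j$. All interferers share the common $\mathbf{h}_0$, so they are conditionally independent given $\mathbf{h}_0$ but not unconditionally independent. I would approximate $\det(\mathbf{I}+\sum_j\widetilde{\mathbf{B}}_j)\approx\prod_j\det(\mathbf{I}+\widetilde{\mathbf{B}}_j)$, which amounts to treating the interferer contributions as independent after normalization by the noise.

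I would justify this in two ways. First, to first order in $t^2$ both sides agree, since $\det(\mathbf{I}+\sum\mathbf{A}_j)=1+\sum\mathrm{tr}\mathbf{A}_j+O(t^4)$, so the mean and variance are preserved. Second, the cross terms $\mathrm{tr}(\widetilde{\mathbf{B}}_i\widetilde{\mathbf{B}}_j)$ have random phases, from independent $\boldsymbol{\upsilon}_i,\boldsymbol{\upsilon}_j$, that largely average out. A fallback justification is the large-$K$ regime, where each $\mathbf{D}$ has entries of order $Q/K$ and the deviation from the product form is of lower order.

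Once the product form is accepted, the uniform averages over the i.i.d. $\boldsymbol{\upsilon}_j$ factorize into the $U_1$-th power of a single average $\frac{1}{K^3}\sum_{\boldsymbol{\upsilon}}\det(\cdot)^{-1}$. The outer $\frac{1}{Q}\sum_\tau$ remains, because all interferers share the same $\tau$. This gives \eqref{cf-X}.
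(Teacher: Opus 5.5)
Your proposal follows the paper's proof essentially step for step. You condition on $\mathbf{h}_0$, $\tau$ and $\{\boldsymbol{\upsilon}_j\}$ to get a Gaussian $X_k$, average over the complex Gaussian to obtain the exact $\det\big(\mathbf{M}_{\mathrm{n}}(t)+\frac{t^2}{4}\sum_j\mathbf{B}_{k\mid\tau,\boldsymbol{\upsilon}_j}\big)^{-1}$, replace the determinant of the sum by a product of determinants, and factorize over the i.i.d.\ $\boldsymbol{\upsilon}_j$. The paper reaches the product form through a second-order $\mathrm{tr}\log$ expansion in which it simply drops the cross terms $\mathrm{tr}\big(\widetilde{\mathbf{B}}_{k\mid\tau,\boldsymbol{\upsilon}_j}(t)\widetilde{\mathbf{B}}_{k\mid\tau,\boldsymbol{\upsilon}_r}(t)\big)$, $j\neq r$; this coincides with your direct replacement to that order.

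One correction to your justification. These cross terms are traces of products of positive semidefinite matrices, so they are real and nonnegative and cannot cancel through ``random phases''. In fact $\det(\mathbf{I}+\mathbf{P}_1+\mathbf{P}_2)\le\det(\mathbf{I}+\mathbf{P}_1)\det(\mathbf{I}+\mathbf{P}_2)$ for $\mathbf{P}_1,\mathbf{P}_2\succeq\mathbf{0}$, so the approximation is one-sided. The defensible support is your first-order argument (agreement through order $t^2$, so the variance is preserved), combined with an explicit neglect of these second-order cross terms, as the paper does.
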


\begin{proof}
See Appendix \ref{app:c}.
\end{proof}

\begin{lemma}\label{l4}
The undesired bin decision variables are asymptotically independent in the upper tail.
\end{lemma}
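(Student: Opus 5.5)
Our plan is to show that for every pair of distinct undesired bins $k\neq k'$ (both different from $a_0$), the upper-tail dependence coefficient
\begin{equation*}
\lambda_{\mathrm{U}}(k,k')=\lim_{u\to1^-}\Pr\big(X_{k'}>F_{X_{k'}}^{-1}(u)\,\big|\,X_{k}>F_{X_{k}}^{-1}(u)\big)
\end{equation*}
vanishes. The natural route is to condition on everything that couples the bins: the channel vectors $\mathbf{h}_0,\mathbf{h}_1,\ldots,\mathbf{h}_{U_1}$, the offset $\tau$, and the discrete parameters $\boldsymbol{\upsilon}_j$. Conditionally, the noise part $\varpi_{\mathrm{n},0}[k]$ is a linear functional of the circular Gaussian noise vector. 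Likewise, once $\mathbf{h}_0$ is fixed, the interference part $\varpi_{\mathrm{I},0}[k]$ is linear in $\mathbf{h}_j$. Hence $(X_k,X_{k'})$ is conditionally jointly Gaussian. Its conditional correlation is determined by inner products of the form $\sum_n |g_{0\mid\tau}[n]|^2 x_k^*[n]x_{k'}[n]$ and $\sum_{l}(\cdot)\,[\mathbf{D}_{k\mid\tau,\boldsymbol{\upsilon}}]_{l,l}[\mathbf{D}_{k'\mid\tau,\boldsymbol{\upsilon}}]^*_{l,l}$, in the same structure used to derive Lemma~\ref{l3}.

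First we bound this conditional correlation. Since $x_k$ and $x_{k'}$ are orthogonal DFT basis chirps, the noise cross-term is a DFT of the piecewise-constant profile $|g_{0\mid\tau}[n]|^2$ at frequency $k-k'\neq0$. Almost surely this is strictly smaller in modulus than the total energy $\sum_n|g_{0\mid\tau}[n]|^2$, by Cauchy--Schwarz with equality only for degenerate channel profiles, which form a set of measure zero. The same argument applies to the interference term. Therefore the conditional correlation coefficient $\rho_{k,k'}$ satisfies $|\rho_{k,k'}|<1$ almost surely. We then invoke the classical result of Sibuya: a bivariate Gaussian with $|\rho|<1$ is asymptotically independent in the upper tail, with $\Pr(Y'>y\mid Y>y)\to0$ at rate $\exp(-(1-\rho)y^2/(2(1+\rho)))$ up to polynomial factors.

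The main obstacle is passing from this conditional statement to the unconditional mixture. $X_k$ is a scale mixture of Gaussians, with conditional variances $\sigma_k^2$ and $\sigma_{k'}^2$ that are quadratic forms in the channels and hence have exponential-type tails. The mixture therefore has a heavier, roughly exponential-type tail, and the extremes are driven jointly by large common channel gains (large $\|\mathbf{h}_0\|$). This shared scale could induce tail dependence. Our first attempt will be to write
\begin{equation*}
\Pr(X_k>x,X_{k'}>x)=\mathbb{E}\big[\Pr(X_k>x,X_{k'}>x\mid\mathcal{F})\big]
\end{equation*}
and split the expectation on the event $\{\max(\sigma_k,\sigma_{k'})\le\epsilon x\}$ and its complement. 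On the first event, the Gaussian bound with $\rho$ bounded away from $1$ gives a term of order $o(\Pr(X_k>x))$; to make this uniform, we will truncate on $\{|\rho_{k,k'}|\le1-\delta\}$, whose complement has vanishing probability as $\delta\to0$. On the complement event, a Laplace-type tail comparison shows that reaching level $x$ in both bins requires both $x/\sigma_k$ and $x/\sigma_{k'}$ to be moderate. Because $\sigma_k$ and $\sigma_{k'}$ weight the port gains $|h_0[l]|^2$ with different (chirp-dependent) coefficients, this joint requirement costs an extra factor that vanishes as $x\to\infty$.

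Combining the two pieces gives $\Pr(X_k>x,X_{k'}>x)/\Pr(X_k>x)\to0$, that is, $\lambda_{\mathrm{U}}(k,k')=0$ for all pairs. This is exactly the pairwise upper-tail asymptotic independence needed to treat $\max_{k\neq a_0}X_k$ via EVT, as if the $K_1$ bins were independent with the marginal obtained from Lemma~\ref{l3}. If the rigorous mixture step proves too delicate, we will fall back on a weaker argument. In that version we would argue asymptotic independence conditionally on the channel realizations, then justify the mixture by dominated convergence together with the observation that the conditional correlations are bounded away from one with high probability for large $K$, since they are $O(1/\sqrt{N})$-type DFT leakages.
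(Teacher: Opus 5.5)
There is a genuine gap, and it sits exactly at the step you flag as the main obstacle. The tools you line up for that step do not close it.

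\textbf{Pointwise correlation bounds are the wrong input.} The unconditional tail of $X_k$ is produced by extremal realizations. At level $x$ the dominant channels have $\|\mathbf{h}_0\|^2$ of order $x$ and are aligned with the leading eigendirection of $\mathbf{A}$ in \eqref{cf-xk}, in a discrete state where that eigenvalue is maximal. An almost-sure bound $|\rho_{k,k'}|<1$ therefore carries the wrong kind of information, because a null set of directions can carry the whole exponential rate. For the same reason, truncating on $\{|\rho_{k,k'}|\le 1-\delta\}$ because its complement has small probability does not bound the complement's share of a vanishing tail. Concretely, take $Q=1$ and $\mathbf{h}_0$ supported on the port carrying $n=0$. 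Then $X_k=X_{k'}$ conditionally, since $x_k[0]=x_{k'}[0]$. This is a null set, yet for $\mathbf{R}=\mathbf{I}_K$ every direction is rate-optimal, so the joint and marginal exponential rates coincide. Asymptotic independence then rests on polynomial prefactors, which your argument never sees.

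\textbf{Sibuya does not apply at a common threshold.} Sibuya's result concerns exceedances of the respective marginal quantiles. Conditionally on $\mathcal{F}$, however, $X_k$ and $X_{k'}$ have unequal variances and face a common threshold $x$. Whenever $\rho_{k,k'}\sigma_{k'}>\sigma_k$, which $|\rho_{k,k'}|<1$ does not exclude, one gets $\Pr(X_{k'}>x\mid X_k>x,\mathcal{F})\to 1$. Hence your claim that the Gaussian bound yields $o(\Pr(X_k>x))$ on $\{\max(\sigma_k,\sigma_{k'})\le\epsilon x\}$ does not follow. Such configurations are harmless only because they are not rate-optimal for $X_k$, and that is an eigenvalue statement, not a correlation statement.

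\textbf{The remaining pieces do not help.} The complement event you single out as the hard part is in fact trivially negligible. Since $\sigma_k^2$ is a Gaussian quadratic form, $\Pr(\sigma_k>\epsilon x)$ decays like $e^{-c\epsilon^2x^2}$, far faster than $\bar{F}_{X_k}(x)\approx e^{-2x/\sqrt{\Lambda_1}}$. Your fallback is invalid. The bivariate Student-$t$ is a Gaussian scale mixture with conditional $|\rho|<1$ and positive tail dependence. So conditional asymptotic independence plus dominated convergence cannot produce the limit of a ratio of vanishing probabilities. Large-$K$ leakage heuristics also say nothing at fixed $K$.

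\textbf{What is needed, and how the paper gets it.} The missing ingredient is a uniform, extremal comparison, and the paper obtains it by \emph{not} conditioning on $\mathbf{h}_0$.
\begin{itemize}
\item Condition only on the finite discrete state $(\tau,\{\boldsymbol{\upsilon}_j\}_{j=1}^{U_1})$ and integrate out $\mathbf{h}_0$, the interfering channels and the noise. This gives the CF $\det(\mathbf{I}_N+\frac{t^2}{4}\mathbf{A})^{-1}$, i.e., a sum of independent Laplace variables with tail rate $2/\sqrt{\lambda_{\max}(\mathbf{A})}$.
\item Maximizing over the finitely many states gives the marginal rate $2/\sqrt{\Lambda_1}$, attained at $d_j=0$ and $a_{\mathrm{p},j}=a_{\mathrm{q},j}=k$.
\item Bound the joint exceedance via $\{X_k>x,X_{\tilde{k}}>x\}\subseteq\{X_k+X_{\tilde{k}}>2x\}$. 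The sum has a CF of the same form with $\tilde{\mathbf{A}}$, hence tail rate $4/\sqrt{\Lambda_2}$ at level $2x$.
\item Show $\Lambda_2<4\Lambda_1$. For $Q>1$, every port segment contains a sample with $|x_k[n]+x_{\tilde{k}}[n]|<2/\sqrt{K}$, which gives $\mu<2$ and $\nu<4$.
\end{itemize}
This strict eigenvalue inequality is the correct replacement for your almost-sure correlation bound. It controls all directions of $\mathbf{h}_0$ and all discrete states simultaneously, and, as the $Q=1$ example shows, it genuinely requires $Q>1$.
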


\begin{proof}
See Appendix \ref{app:d}.
\end{proof}

Since the low-SER regime, e.g., $10^{-4}$, is of primary interest, we focus on the upper-tail behavior of the undesired bins. Based on Lemma \ref{l4}, EVT can be applied to approximate the upper-tail distribution of $X^\star=\max_{k\in\mathcal{K},k\neq{a}_{0}}X_{k}$ and it follows $\mathrm{Gumbel}(b_{K_{1}},a_{K_{1}})$, where 
\begin{equation}
	\begin{split}
		b_{K_{1}}=F_{X_k}^{-1}\left(1-\frac{1}{K_{1}}\right)\:\mathrm{and}\:a_{K_{1}}=F_{X_k}^{-1}\left(1-\frac{1}{eK_{1}}\right)\!-\!b_{K_{1}}.
	\end{split}
\end{equation}
The two quantiles can be obtained by   using the Gil-Pelaez inversion on \eqref{cf-X} and solved numerically.

\begin{corollary}\label{c1}
For $N=K$ and sufficiently large $W$ such that $\mathbf{R}\to\mathbf{I}_{K}$, the CDF of $X_{k}$ is given by directly setting $\theta=0$ in Lemma 2, noted as $F_{X_k}(x)$.
\end{corollary}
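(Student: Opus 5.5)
The plan is to rerun the argument behind Lemma~\ref{l2} with the desired-signal term switched off. In that regime the two decision variables differ only through the deterministic coefficient multiplying $|h_0[l]|^2$.

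\textbf{Step 1: the structure behind Lemma~\ref{l2}.} For $N=K$ we have $Q=1$, so $\tau=0$ and each sample $n$ uses its own port $l=n+1$. With $\mathbf{R}\to\mathbf{I}_K$, the coefficients $h_0[l]$ are i.i.d. $\mathcal{CN}(0,\Omega)$. Conditioned on $\mathbf{h}_0$, the noise and interference in each per-sample term are zero-mean complex Gaussian. For the noise this holds exactly. For the interference, $h_j[l]$ is independent across $l$, so the chirp phases only rotate independent circular Gaussians.

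Hence
\begin{equation*}
\varpi_0[a_0]=\sum_{n}\Big(\theta|h_0[n]|^2+|h_0[n]|\,\sqrt{v}\,Z_n\Big),\qquad Z_n\sim\mathcal N(0,1)\ \text{i.i.d.},
\end{equation*}
where $v=N_0/(2K)+E_{\mathrm s}\Omega U_1/(2K^2)$ collects the noise and interference variance per unit $|h_0[n]|^2$. Write $|h_0[n]|^2=\Omega E_n$ with $E_n$ exponential. Each summand $Y_n$ then has moment generating function $\mathbb E[e^{sY_n}]=(1-\Omega(\theta s+vs^2/2))^{-1}$. Factoring this as $(1-ps)^{-1}(1+qs)^{-1}$ gives $pq=\Omega v/2$ and $p-q=\theta\Omega$, which reproduces the stated $p$ and $q$. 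The sum is therefore a difference of independent $\mathrm{Gamma}(K,p)$ and $\mathrm{Gamma}(K,q)$ variables, and the partial-fraction CDF gives the expressions in \eqref{a_i-x>0}--\eqref{a_i-x<0}.

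\textbf{Step 2: the undesired bin.} For $k\neq a_0$, $X_k$ excludes the desired term by definition, which is exactly $\theta=0$. I will check that the conditional variance is unchanged. The noise part is $\sum_n (g_0[n])^*z[n](x_k[n])^*$, and since $|x_k[n]|^2=1/K$ its real part has conditional variance $\sum_n|h_0[n]|^2 N_0/(2K)$, independent of $k$. For each interferer $j$, the term $(h_0[n])^*h_j[n+1]x_{\boldsymbol\upsilon_j}[n](x_k[n])^*$ has conditional variance $|h_0[n]|^2\Omega E_{\mathrm s}/K^2$ per sample. This holds for every $\boldsymbol{\upsilon}_j$ and every $k$, because the unit-modulus phase factor of magnitude $1/K$ drops out when $h_j$ is i.i.d. circular across ports. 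Taking the real part halves it, which gives the second component of $v$.

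Consequently $X_k=\sum_n\sqrt{\Omega E_n v}\,Z_n$. Its MGF is $(1-\Omega v s^2/2)^{-K}$, i.e., $p=q=\sqrt{\Omega v/2}$, and this is precisely Lemma~\ref{l2} evaluated at $\theta=0$. The averaging over $\boldsymbol\upsilon_j$ and $d_j$ is trivial because the conditional law does not depend on them.

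\textbf{Main obstacle.} The step needing care is the claim that, conditioned on $\mathbf h_0$, the interference is exactly Gaussian with $k$-independent variance. The interference in a bin is a sum over $n$ of products $h_j[n+1]\times(\text{phase})$. It is Gaussian given $\mathbf h_0$ because it is linear in the Gaussian vector $\mathbf h_j$, and it is uncorrelated across $n$ because $\mathbf R=\mathbf I$. Therefore only $\sum_n |h_0[n]|^2|\cdot|^2$ enters, and the chirp mismatch between $x_{\boldsymbol\upsilon_j}$ and $x_k$ (including the splitting into $\mathcal D_{j,\mathrm p}$ and $\mathcal D_{j,\mathrm q}$) is irrelevant. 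I would state this explicitly and then cite the Lemma~\ref{l2} computation.
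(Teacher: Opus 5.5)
Your proposal is correct and takes the same route as the paper. The paper gets Corollary~1 by dropping the desired term $\theta\zeta_0$ from the Appendix~B argument: conditioned on $\zeta_0=\mathbf{h}_0^{\mathrm{H}}\mathbf{h}_0$, the interference-plus-noise in any bin is $\mathcal{N}(0,v\zeta_0)$ because $\mathbf{D}_{k\mid 0,\boldsymbol{\upsilon}_j}\mathbf{D}_{k\mid 0,\boldsymbol{\upsilon}_j}^{\mathrm{H}}=\mathbf{I}_K/K^2$, which gives the CF $(1+\Omega v t^2/2)^{-K}$. Your per-sample MGF factorization and your explicit check that the conditional variance does not depend on $k$ or $\boldsymbol{\upsilon}_j$ are a more detailed version of that computation. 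The only blemish is notational: the port index is $l=n+1$ for both $h_0$ and $h_j$, so write $h_0[n+1]$ rather than $h_0[n]$.
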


\begin{corollary}\label{c2}
Under the same condition with Corollary~\ref{c1}, $X_{k}$ can be approximated as a Gaussian random variable with the variance  $\sigma_{X}^{2}=\Omega N_{0}/2+E_{\mathrm{s}}\Omega^{2}U_{1}/(2K)$. This follows from \eqref{w-cf}. By setting $\theta=0$ and defining $\sigma_{X}^{2}=K\Omega v$, we have
\begin{equation}
	\begin{split}
		\Psi_{X_{k}}(t)&=\left(1+\frac{\Omega vt^{2}}{2}\right)^{\!-K}
		=\left(1+\frac{\sigma_{X}^{2}t^{2}}{2K}\right)^{\!-K}\\
		&\approx\exp\left(-\frac{\sigma_{X}^{2}t^{2}}{2}\right),
	\end{split}
\end{equation}
for any ${k}\in{\mathcal{K}}\setminus\{a_{0}\}$, where the last expression is the CF of a zero-mean Gaussian RV with variance $\sigma_{X}^{2}$.
\end{corollary}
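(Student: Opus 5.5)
The plan is to start from the characteristic function underlying Lemma~\ref{l2}, specialise it to the undesired bins, and then pass to the large-$K$ limit. In the regime $N=K$, $\mathbf{R}\to\mathbf{I}_K$ we have $Q=1$ and $\tau=0$, so each sample $n$ is received through its own port. The channel coefficients $h_0[l]$, $l\in\mathcal{K}$, are therefore i.i.d.\ $\mathcal{CN}(0,\Omega)$.

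\textbf{Step 1: the conditional law of $X_k$.} I will invoke the representation \eqref{w-cf} from the proof of Lemma~\ref{l2}, in which $\varpi_0[a_0]$ is written as a sum of $K$ i.i.d.\ terms. Each term has the form $\theta|h_0[l]|^2+\Re\{h_0[l]^*\xi_l\}$. Here $\xi_l$ collects the noise and the interference seen at port $l$, and, conditioned on $h_0[l]$, it is approximately circular Gaussian with per-port variance $2v$, where $v=N_0/(2K)+E_{\mathrm{s}}\Omega U_1/(2K^2)$.

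For $k\neq a_0$ the desired-signal leakage is neglected, as stated before Lemma~\ref{l3}. This is exactly the statement $\theta=0$, which is the content of Corollary~\ref{c1}. Conditioned on $h_0[l]$, each term is then $\mathcal N(0,v|h_0[l]|^2)$. Averaging over $|h_0[l]|^2\sim\mathrm{Exp}(1/\Omega)$ gives the per-term CF
\begin{equation*}
\mathbb{E}\big[e^{-t^2v|h|^2/2}\big]=\big(1+\Omega v t^2/2\big)^{-1}.
\end{equation*}
By independence across ports, $\Psi_{X_k}(t)=(1+\Omega vt^2/2)^{-K}$.

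As a consistency check, this should agree with Lemma~\ref{l3} specialised to $\mathbf{R}=\mathbf{I}$, provided that averaging $|\mathbf{D}|^2$ over $\boldsymbol{\upsilon}$ gives per-port interference energy $1/K^2$ per ED. I expect that averaging step to require the most care. If an exact match is awkward, I will accept the standard approximation in which the average of the determinants is replaced by the determinant of the averaged interference covariance.

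\textbf{Step 2: variance.} With $\sigma_X^2=K\Omega v$, substituting $v$ gives
\begin{equation*}
\sigma_X^2=\Omega N_0/2+E_{\mathrm{s}}\Omega^2U_1/(2K).
\end{equation*}
This matches the second and third terms of \eqref{var_wi} in Lemma~\ref{l1}. The first term, $E_{\mathrm{s}}\Omega^2\mathrm{tr}(\mathbf{R}^2)/N^2$, vanishes here because it comes from the desired signal, which is absent in the undesired bins.

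\textbf{Step 3: Gaussian limit.} Rewrite the CF as $(1+\sigma_X^2t^2/(2K))^{-K}$. Since $\log(1+y)=y+O(y^2)$, for fixed $t$ we have
\begin{equation*}
K\log\big(1+\sigma_X^2t^2/(2K)\big)\to \sigma_X^2t^2/2.
\end{equation*}
Hence $\Psi_{X_k}(t)\to\exp(-\sigma_X^2t^2/2)$ pointwise. Lévy's continuity theorem then gives convergence in distribution to $\mathcal N(0,\sigma_X^2)$. The approximation is accurate for $K\ge 2^7$.

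Equivalently, $X_k$ is a symmetric variance-Gamma variable that tends to a Gaussian by the CLT applied to the $K$ i.i.d.\ summands. The argument is identical for every $k\in\mathcal{K}\setminus\{a_0\}$, because the CF does not depend on $k$.
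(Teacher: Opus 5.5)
Your proposal is correct and follows essentially the paper's route: set $\theta=0$ in the Gamma-mixture CF \eqref{w-cf}, identify $\sigma_{X}^{2}=K\Omega v$, and use $(1+y/K)^{-K}\to e^{-y}$. Your per-port product is equivalent to the paper's average over $\zeta_{0}=\mathbf{h}_{0}^{\mathrm{H}}\mathbf{h}_{0}\sim\mathrm{Gamma}(K,\Omega)$, and your L\'evy-continuity step (best applied to $X_{k}/\sigma_{X}$, since $\sigma_{X}$ depends on $K$) just makes the paper's ``$\approx$'' explicit.

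The averaging worry in your consistency check is moot. With $Q=1$, every diagonal entry of $\mathbf{D}_{k\mid\tau,\boldsymbol{\upsilon}}$ has modulus exactly $1/K$, so $\mathbf{D}_{k\mid\tau,\boldsymbol{\upsilon}}\mathbf{D}_{k\mid\tau,\boldsymbol{\upsilon}}^{\mathrm{H}}=\mathbf{I}_{K}/K^{2}$ for every $\boldsymbol{\upsilon}$ and the conditional Gaussian law is exact rather than approximate. It is the exact CF \eqref{cf-h} that reduces precisely to $(1+\Omega vt^{2}/2)^{-K}$; the ED-factorized approximation of Lemma~\ref{l3} matches it only to first order in the interference when $U_{1}>1$.
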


\begin{algorithm}[t]
	\caption{Numerical Evaluation of $\delta$}
	\label{alg:delta}
	\begin{algorithmic}[1]
		\REQUIRE $K$, $N$, $U_{1}$, $E_{\mathrm{s}}$, $N_{0}$, $\Omega$, and $\mathbf{R}$
		\PARAMETER $L_{0}$ is the number of base deterministic integration points.
		\ENSURE $\hat{\delta}$
		\STATE Set
		$Q=K/N$,
		$D={2NU}+{2K}+{1}+{3U_{1}}$.
		\STATE Construct $\{\mathbf{u}_{s}\}_{s=1}^{L_{0}}\subset(0,1)^{D}$
		using a $D$-dimensional Kronecker sequence generated from the square roots of the first $D$ prime numbers. Then, set
		$\mathbf{u}_{L_{0}+s}=1-\mathbf{u}_{s}$ for $s=1,\cdots,L_0$.
		\FOR{$s=1,\ldots,2L_0$}
		\STATE Transform the first $2N(U_{1}+1)$ coordinates of $\mathbf{u}_{s}$ into the desired and interfering channels using the inverse of standard normal CDF $\Phi^{-1}(\cdot)$ and $(\Omega\mathbf{R})^{1/2}/\sqrt{2}$.
		\STATE Map the next coordinate uniformly to $\tau^{(s)}\in\{0,\ldots,Q-1\}$.
		\STATE Map the following $3U_{1}$ coordinates uniformly to $\{\boldsymbol{\upsilon}_{j}^{(s)}\}_{j=1}^{U_{1}}$ over their  possible values.
		\STATE Transform the remaining $2K$ coordinates into the complex noise samples using $\Phi^{-1}(\cdot)$ and $\sqrt{N_{0}/2}$.
		\STATE Evaluate $\varpi_{0}^{(s)}[a_{0}]$ and
		$X_{k}^{(s)}
		=
		\varpi_{\mathrm{I},0}^{(s)}[k]
		+
		\varpi_{\mathrm{n},0}^{(s)}[k]$
		from \eqref{w-2} and set
		$X_{s}^{\star}
		=
		\max_{k\in\mathcal{K},\,k\neq a_{0}}
		X_{k}^{(s)}$.
		\ENDFOR
		\STATE Compute
		\begin{equation*}
			\begin{split}
				&\hat{\tau}_{\mathrm{K}}^{(L)}
				={}
				\frac{2}{2L_0(2L_0-1)}
				\\
				&\sum_{1\leq r<s\leq L}
				\operatorname{sgn}\Big[
				\big(
				\varpi_{0}^{(r)}[a_{0}]
				-
				\varpi_{0}^{(s)}[a_{0}]
				\big)
				\big(
				X_{r}^{\star}
				-
				X_{s}^{\star}
				\big)
				\Big].
			\end{split}
		\end{equation*}
		\STATE Set
		$\hat{\delta}
		=
		\big(
		1-\hat{\tau}_{\mathrm{K}}^{(L)}
		\big)^{-1}$.
		\STATE \textbf{return} $\hat{\delta}$.
	\end{algorithmic}
\end{algorithm}

\subsection{SER Performance}

Due to the common discrete variables and shared channel, $\max_{k\in\mathcal{K},k\neq{a}_{0}}X_{k}$ is not independent with $\varpi_{0}[a_{0}]$. However, the dependence is difficult to analytically represent due to the nonlinear maximum operation and the loss of orthogonality among the segmented LoRa symbols.  

According to Sklar’s theorem, the joint CDF of $\varpi_{0}[a_0]$ and $X^\star$ under the Gumbel copula,\footnote{The Gumbel Copula is employed since it exhibits positive upper-tail dependence and is therefore suitable for characterizing the tendency of $\varpi_{0}[a_{0}]$ and $X^{\star}$ to take large values simultaneously.} can be expressed as \cite{bibitem3-1,bibitem3-2}
\begin{equation}
	\begin{split}
		F_{\varpi_{0}[a_0],X^{\star}}(y,x)=C_{\delta}\big(F_{\varpi_{0}[a_0]}(y),F_{X^\star}(x)\big),
	\end{split}
\end{equation}
where $C_{\delta}(u,v)=\exp\big(-\left[(-\log u)^{\delta}+(-\log v)^{\delta}\right]^{1/\delta}\big)$ and the dependence parameter is given by $\delta\geq1$. Finally, the analytical SER is given by the following Lemma.

\begin{lemma}\label{l5}
The SER of the proposed {\it lora}-FAMA scheme can be approximated by \eqref{ser-1} for a relatively small antenna aperture and by \eqref{ser-2} for $N=K$ and $\mathbf{R}=\mathbf{I}_{K}$,
\begin{figure*}[!ht]
	\normalsize
	\begin{equation}\label{ser-1}
		\begin{split}
			P_{\mathrm{s}}\approx\int_{0}^{\infty}&\left[1-C_{\delta}^{(1)}\left(\frac{\gamma\left(\alpha_{0},x/\beta_{0}\right)}{\Gamma(\alpha_{0})},\exp\left[-\exp\left(-\frac{x-b_{K_{1}}}{a_{K_{1}}}\right)\right]\right)\right]\frac{x^{\alpha_{0}-1}}{\Gamma(\alpha_{0})\beta_{0}^{\alpha_{0}}}\exp\left(-\frac{x}{\beta_{0}}\right)\mathrm{d}x.
		\end{split}
	\end{equation}
	\hrule
\end{figure*}
\begin{figure*}
	\begin{equation}\label{ser-2}
	\begin{split}
		P_{\mathrm{s}}\approx\int_{-\infty}^{\infty}
		&\left[
		1-C_{\delta}^{(1)}
		\left(
		F_{\varpi_{0}[a_{0}]}(x),
		\left[
		\Phi\left(\frac{x}{\sigma_{X}}\right)
		\right]^{K_{1}}
		\right)
		\right]
		\begin{cases}
			\displaystyle
			\frac{\exp\left(x/q\right)}
			{\Gamma(K)^{2}p^{K}q^{K}}
			\sum_{b=0}^{K_{1}}
			\binom{K_{1}}{b}
			\frac{\Gamma(K+b)(-x)^{K_{1}-b}}
			{\rho^{K+b}},
			&x<0,\\[3mm]
			\displaystyle
			\frac{\exp\left(-x/p\right)}
			{\Gamma(K)^{2}p^{K}q^{K}}
			\sum_{b=0}^{K_{1}}
			\binom{K_{1}}{b}
			\frac{\Gamma(K+b)x^{K_{1}-b}}
			{\rho^{K+b}},
			&x\geq0
		\end{cases}
		\mathrm{d}x.
	\end{split}
	\end{equation}
	\hrule
\end{figure*}
where $C_{\delta}^{(1)}(u,v)=\partial C_{\delta}(u,v)/\partial u$. Besides, $\Gamma(\cdot)$ and $\gamma(\cdot,\cdot)$ denote the Gamma function and the lower incomplete Gamma function, respectively.
\end{lemma}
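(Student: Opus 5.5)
The plan is to write the SER in \eqref{ser} as $P_{\mathrm{s}}=\Pr(X^\star>\varpi_{0}[a_0])$, where the undesired bins carry only interference and noise, i.e., $\varpi_{0}[k]\approx X_k$ for $k\neq a_0$ as adopted in the undesired-bin analysis. We then condition on the desired-bin value:
\begin{equation*}
P_{\mathrm{s}}=\int \Big[1-\Pr\big(X^\star\le x\,\big|\,\varpi_{0}[a_0]=x\big)\Big] f_{\varpi_{0}[a_0]}(x)\,\mathrm{d}x .
\end{equation*}
The conditional CDF comes from the joint CDF given by Sklar's theorem with the Gumbel copula. Writing $F_{\varpi_{0}[a_0],X^\star}(y,x)=C_\delta(F_{\varpi_0[a_0]}(y),F_{X^\star}(x))$ and differentiating in $y$ gives
\begin{equation*}
\Pr\big(X^\star\le x\mid \varpi_{0}[a_0]=y\big)=C_\delta^{(1)}\big(F_{\varpi_0[a_0]}(y),F_{X^\star}(x)\big).
\end{equation*}
This holds because $\partial_y C_\delta(F(y),G(x))=C_\delta^{(1)}(F(y),G(x))f(y)$, and dividing by the marginal density $f(y)$ leaves the partial derivative of the copula alone. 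Setting $y=x$ gives the bracketed factor in both \eqref{ser-1} and \eqref{ser-2}. We take $\delta$ as given, computed in practice by Algorithm~\ref{alg:delta} through the relation $\delta=(1-\tau_{\mathrm K})^{-1}$ between Kendall's tau and the Gumbel parameter.

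For the small-aperture case \eqref{ser-1}, we substitute the marginals already obtained. The desired bin follows the moment-matched Gamma law from Lemma~\ref{l1} and the subsequent remark, with CDF $\gamma(\alpha_0,x/\beta_0)/\Gamma(\alpha_0)$ and the Gamma density shown. Because the Gamma law is supported on $x\ge0$, the integral runs over $(0,\infty)$. By Lemma~\ref{l4}, the undesired bins are asymptotically independent in the upper tail, so EVT gives $F_{X^\star}(x)=\exp[-\exp(-(x-b_{K_1})/a_{K_1})]$. The location and scale $b_{K_1},a_{K_1}$ are obtained from the Gil-Pelaez inversion of the CF in Lemma~\ref{l3}.

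For the i.i.d.\ case \eqref{ser-2}, we use the exact CDF of Lemma~\ref{l2} for $\varpi_0[a_0]$ and differentiate it piecewise to get the density. For $x\ge0$, differentiating $1-e^{-x/p}\sum_\kappa\sum_b A_{\kappa,b}x^{\kappa-b}/(p^\kappa q^K)$ produces a telescoping sum over $\kappa$, so only the $\kappa=K_1$ terms survive. Rewriting $A_{K_1,b}$ then yields the factor $\binom{K_1}{b}\Gamma(K+b)/(\Gamma(K)^2\rho^{K+b}p^Kq^K)$. The $x<0$ branch is handled the same way. A cleaner and safer route is to derive the density directly as the convolution of a $\mathrm{Gamma}(K,p)$ variable and the negative of a $\mathrm{Gamma}(K,q)$ variable. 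This representation is the one underlying Lemma~\ref{l2}, since with $\mathbf R=\mathbf I$ the variable $\varpi_0[a_0]$ is a sum of $K$ i.i.d.\ Laplace-type differences of exponentials, with CF $\big((1-\mathrm{j}pt)(1+\mathrm{j}qt)\big)^{-K}$. Expanding the convolution integral binomially gives the stated density directly.

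For the undesired bins, Corollary~\ref{c2} makes each $X_k$ Gaussian with variance $\sigma_X^2$. Under the i.i.d.\ channel, Lemma~\ref{l4} gives asymptotic independence across bins, so we may use $F_{X^\star}(x)=[\Phi(x/\sigma_X)]^{K_1}$ instead of the Gumbel tail. The integral now runs over all of $\mathbb R$.

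The main obstacle is justifying that the conditional CDF is accurately represented by a single Gumbel copula with the surrogate marginals. This is an approximation rather than an identity, so the argument will simply state it as a modeling step. The remaining algebraic hurdle is verifying the telescoping, or equivalently the convolution coefficients, that produce the piecewise density in \eqref{ser-2}.
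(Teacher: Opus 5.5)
Your proposal is correct and follows the paper's own argument: condition on $\varpi_{0}[a_{0}]$, take $C_{\delta}^{(1)}\big(F_{\varpi_{0}[a_{0}]}(x),F_{X^{\star}}(x)\big)$ as the conditional CDF of $X^{\star}$, and substitute the Gamma/Gumbel marginals for \eqref{ser-1} or the Lemma~\ref{l2} density with $[\Phi(x/\sigma_{X})]^{K_{1}}$ for \eqref{ser-2}; your derivation of the copula conditional and of the piecewise density (both the telescoping and the convolution routes check out) fills in steps the paper leaves implicit. One small caution: Lemma~\ref{l4} does not really justify the product form $[\Phi(x/\sigma_{X})]^{K_{1}}$ when $N=K$, because its proof requires $Q>1$ (for $Q=1$ one gets exactly $\Lambda_{2}=4\Lambda_{1}$ and the tail-exponent bound is inconclusive), so it is cleaner to present that step simply as part of the stated approximation.
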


\begin{proof}
	Since a symbol error occurs when $X^\star>\varpi_{0}[a_0]$, given $\varpi_{0}[a_0]=y$, we have $P_{\mathrm{e}}=\int_{-\infty}^{\infty}\left[1-F_{X^{\star}\mid\varpi_{0}[a_{0}]=y}(y)\right]$ $f_{\varpi_{0}[a_{0}]}(y)\mathrm{d}y$, which completes the proof.
\end{proof}

\begin{remark}
The dependence parameter $\delta$ in Lemma~\ref{l5} is determined from Kendall's tau \cite{bibitem3-1} between $\varpi_{0}[a_{0}]$ and $X^{\star}$ as 
$\delta={1}/{(1-\tau_{\mathrm{K}})}$,
where $\tau_{\mathrm{K}}$ is given by
\begin{equation}\label{kendall-tau}
		\tau_{\mathrm{K}}
		=
		\mathrm{E}\left[
		\operatorname{sgn}\left(
		\left(\varpi_{0}[a_{0}]-\widetilde{\varpi}_{0}[a_{0}]\right)
		\left(X^{\star}-\widetilde{X}^{\star}\right)
		\right)
		\right],
\end{equation}
which can be numerically evaluated using Algorithm~\ref{alg:delta}. The rationale and details are provided in Appendix~\ref{app:e}.
\end{remark}

\section{Simulation Results}
\begin{figure*}[t]
	\centering
	\subfloat[ $N=2^6$]{\includegraphics[width=0.32\textwidth]{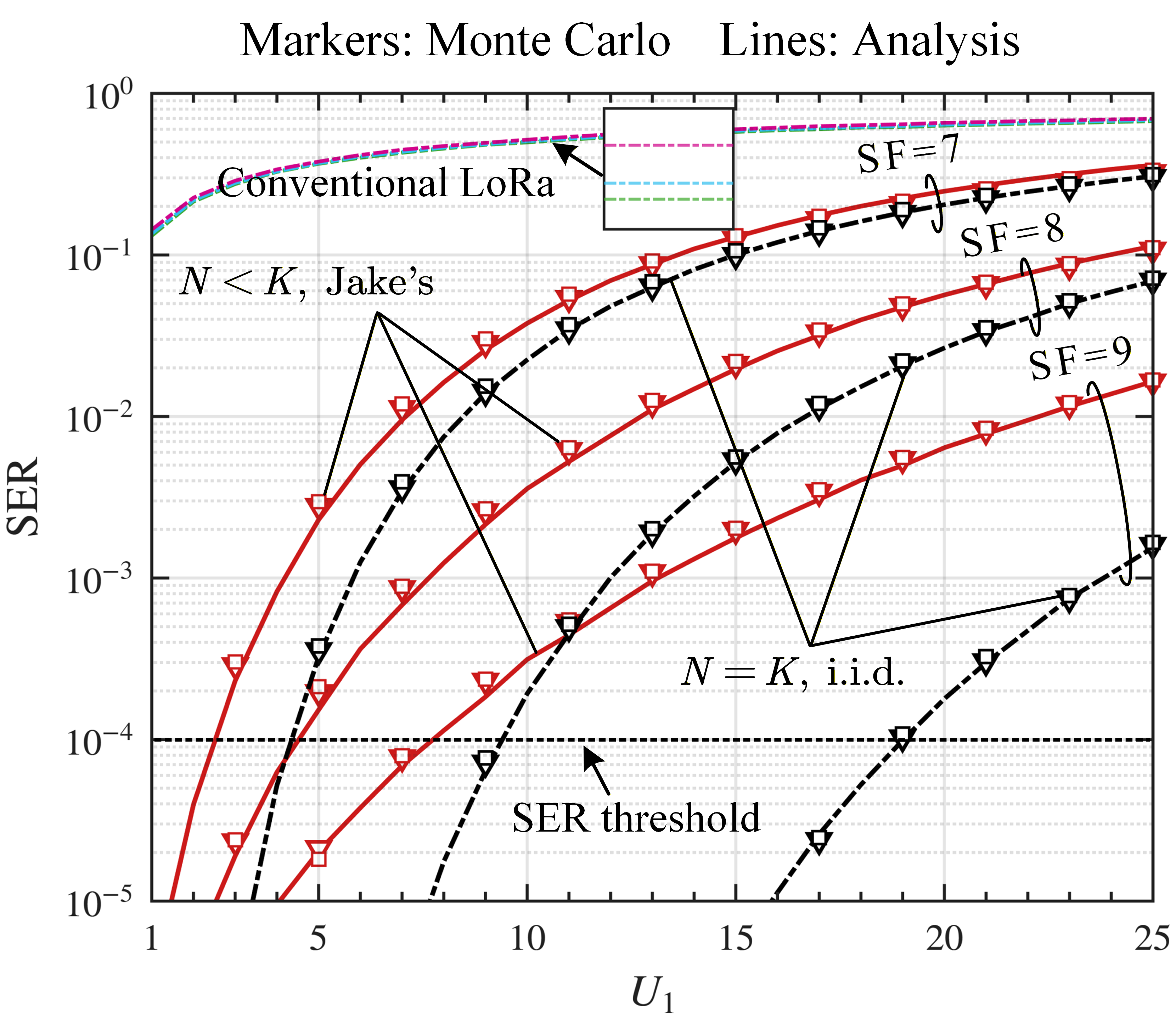}\label{Sim:1-1}}
	\hfill
	\subfloat[$N=2^7$]{\includegraphics[width=0.32\textwidth]{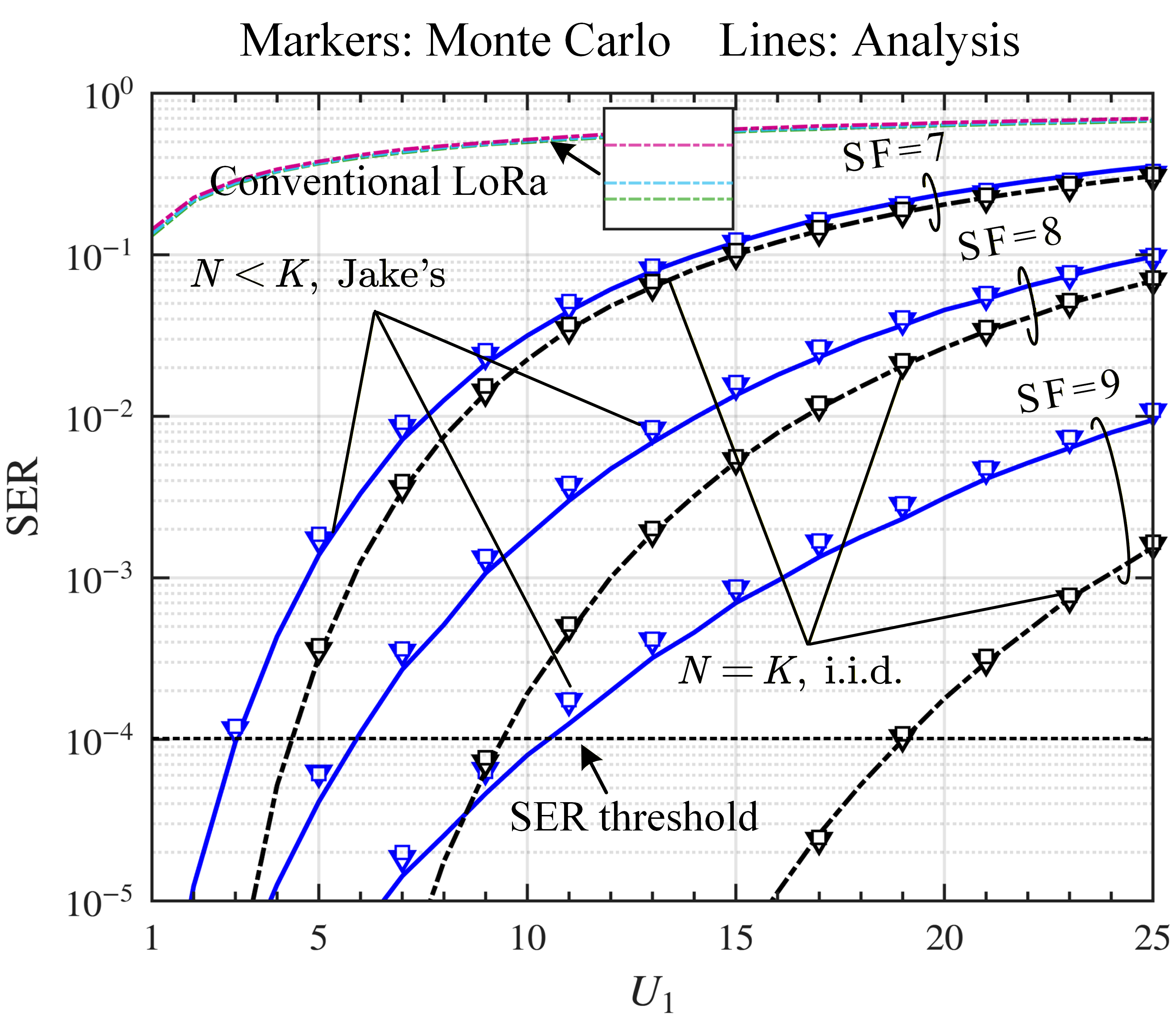}\label{Sim:1-2}}
	\hfill
	\subfloat[$N=2^8$]{\includegraphics[width=0.32\textwidth]{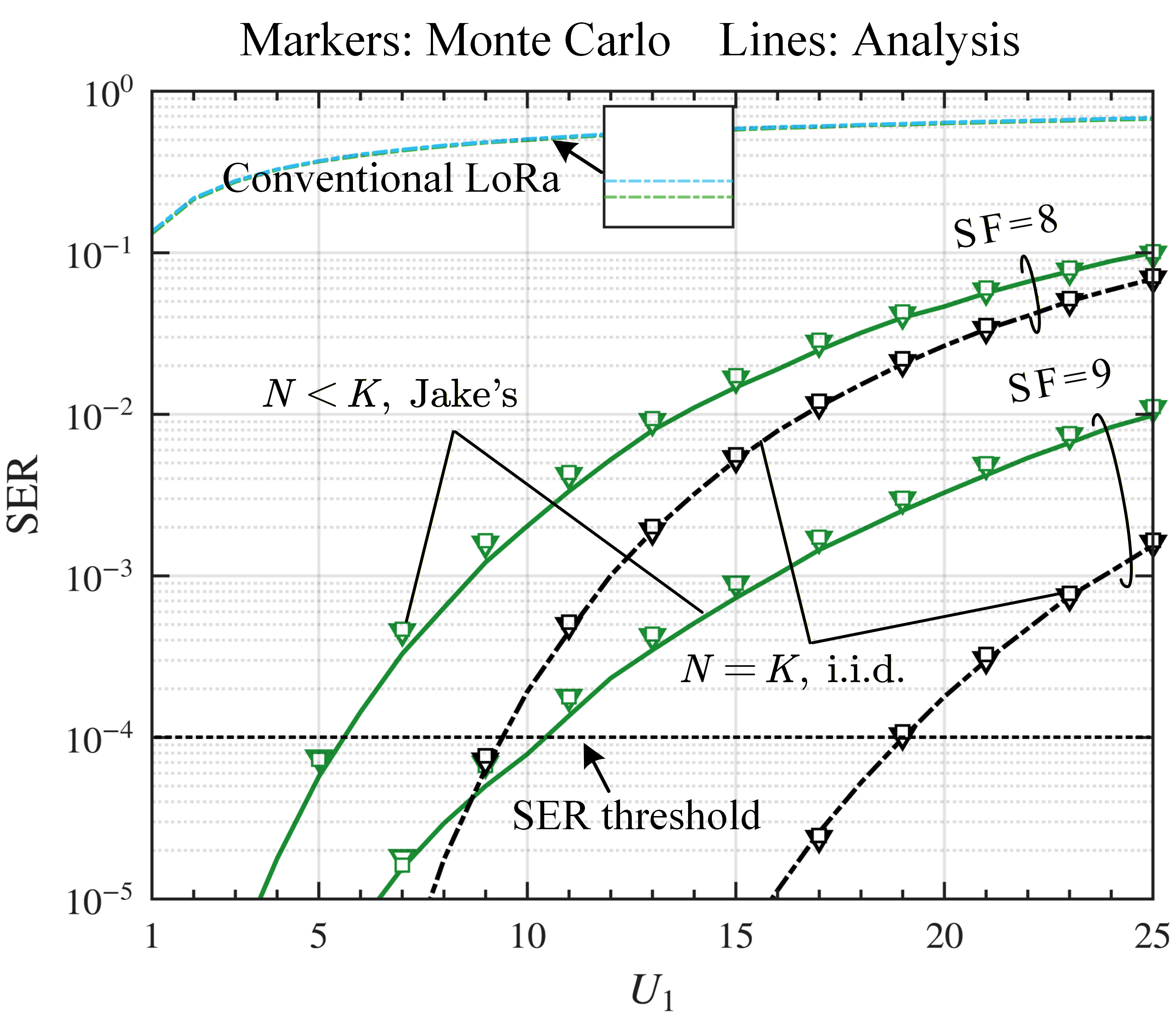}\label{Sim:1-3}}
	\caption{SER performance with $U_{1}$ interfering EDs. Square markers denote the Monte Carlo results obtained by neglecting the desired-signal contribution to the undesired bins, whereas inverted-triangle markers denote the exact system-level SER. Note that Fig.~\ref{Sim:1-3} only includes two set of results since $N<K$. ($SF=\{7,8,9\}$, $N=\{2^6,2^7,2^8\}$, $W_1=W_2=4$, $\mathrm{SNR}=0$~dB, $E_\mathrm{s}=1$, $\Omega=1$)}
	\label{Sim:1}
\end{figure*}

\begin{figure*}[t]
	\centering
	\subfloat[ $N=2^6$]{\includegraphics[width=0.32\textwidth]{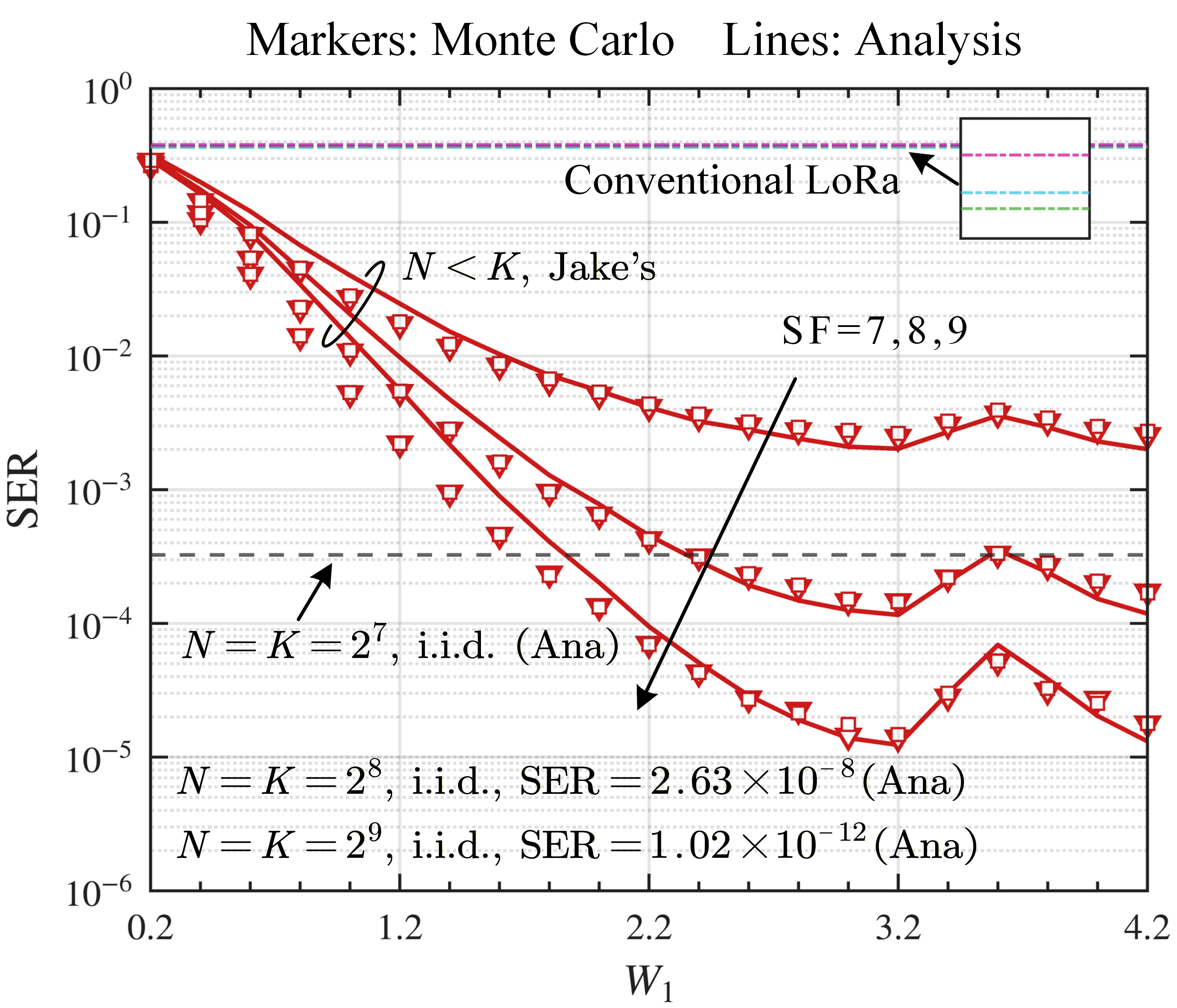}\label{Sim:2-1}}
	\hfill
	\subfloat[$N=2^7$]{\includegraphics[width=0.32\textwidth]{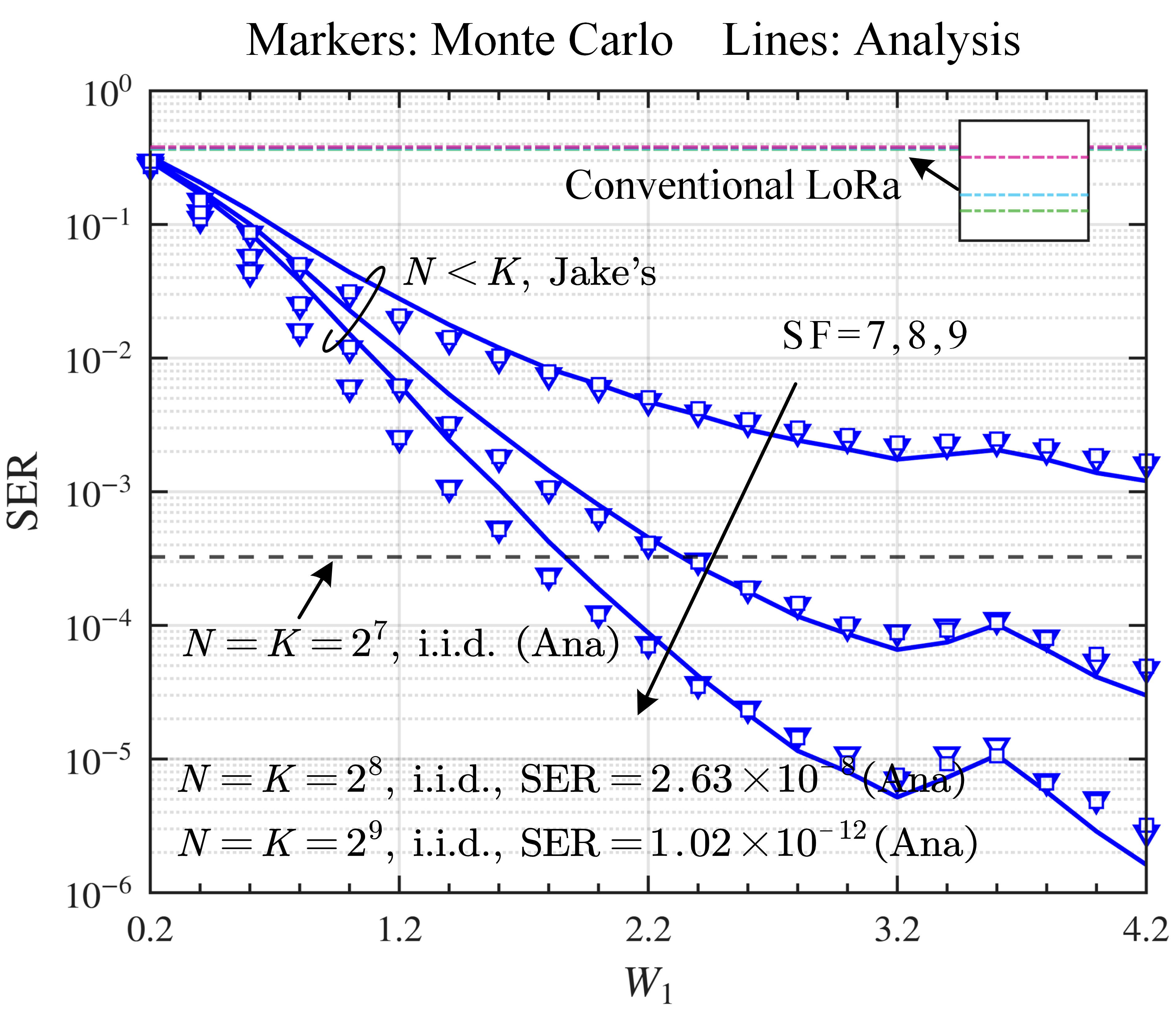}\label{Sim:2-2}}
	\hfill
	\subfloat[$N=2^8$]{\includegraphics[width=0.32\textwidth]{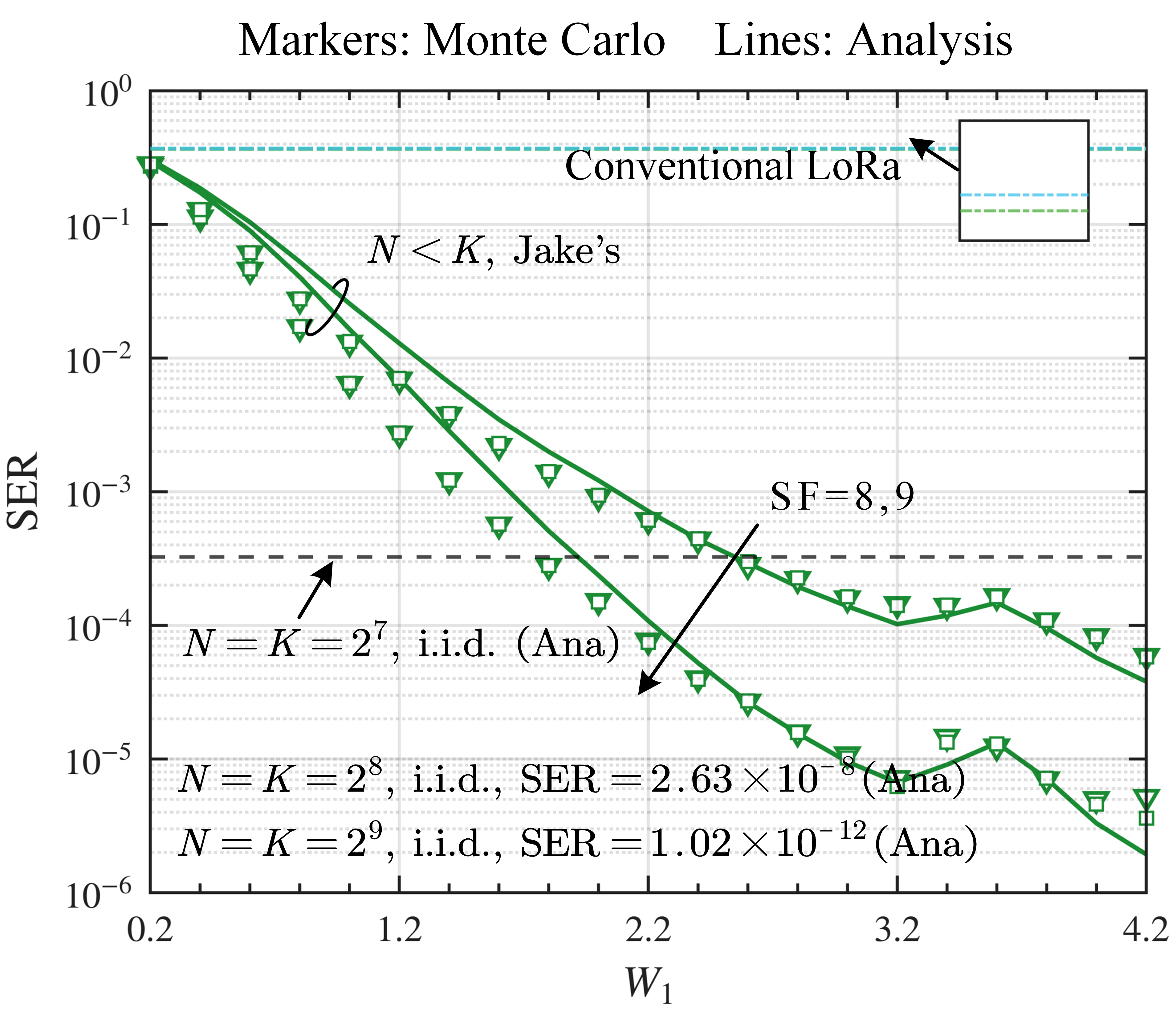}\label{Sim:2-3}}
	\caption{SER performance for different values of $W_{1}$ with fixed $W_{2}$. ($SF=\{7,8,9\}$, $N=\{2^6,2^7,2^8\}$, $U_1=5$, $W_2=4$, $\mathrm{SNR}=0$~dB, $E_\mathrm{s}=1$, $\Omega=1$)}
	\label{Sim:2}
\end{figure*}

\begin{figure*}[t]
	\centering
	\subfloat[ $N=2^6$]{\includegraphics[width=0.32\textwidth]{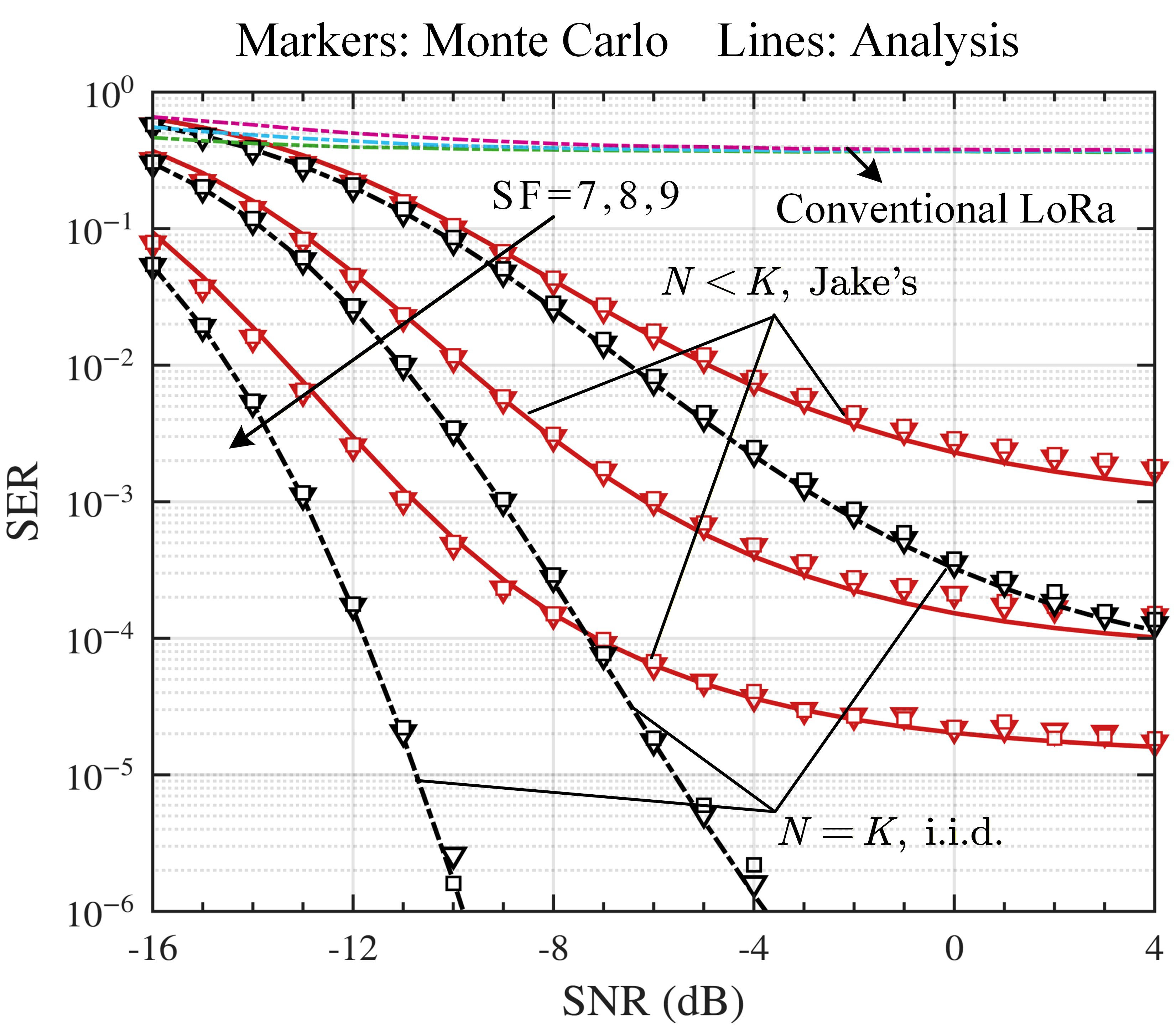}\label{Sim:3-1}}
	\hfill
	\subfloat[$N=2^7$]{\includegraphics[width=0.32\textwidth]{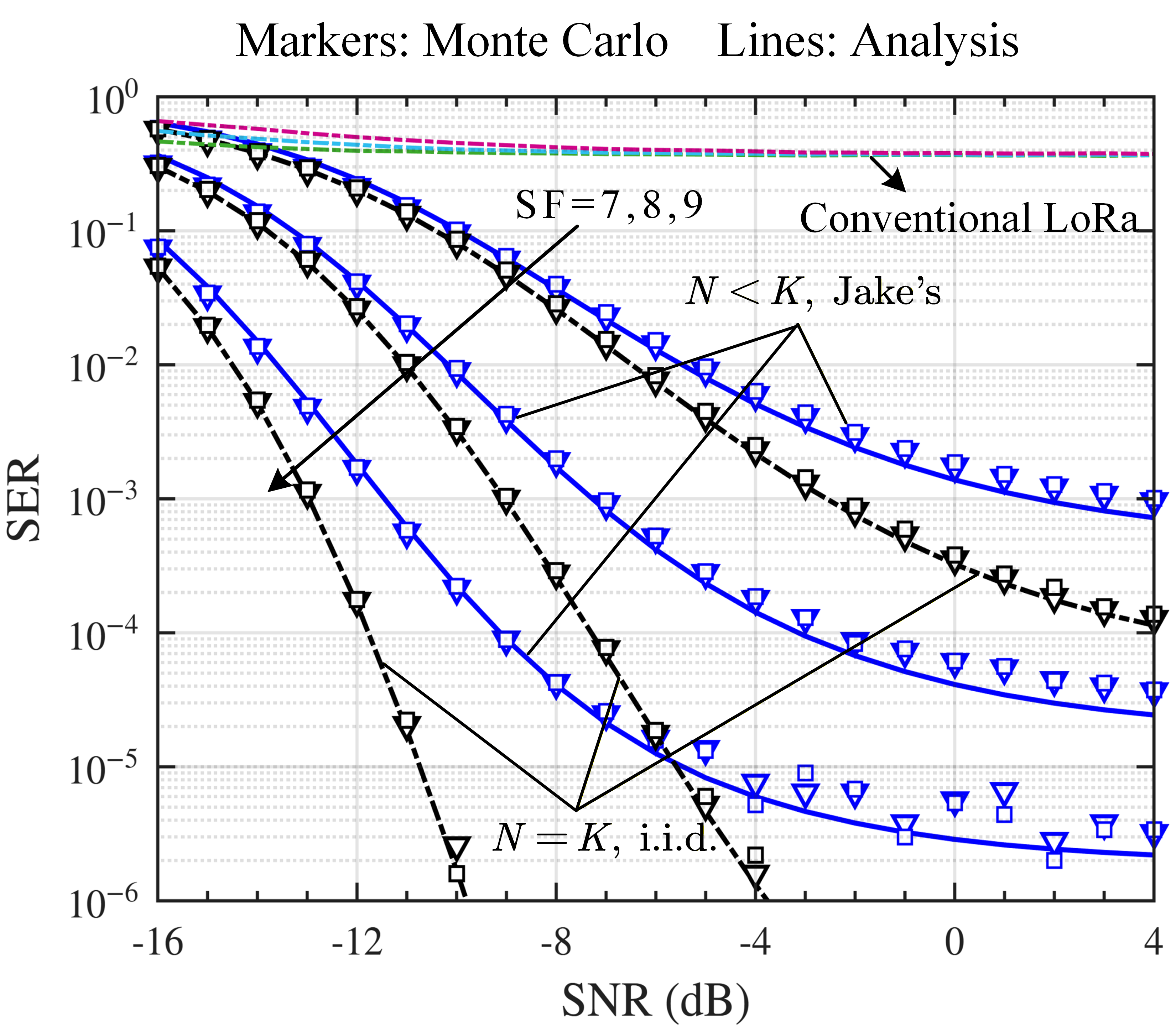}\label{Sim:3-2}}
	\hfill
	\subfloat[$N=2^8$]{\includegraphics[width=0.32\textwidth]{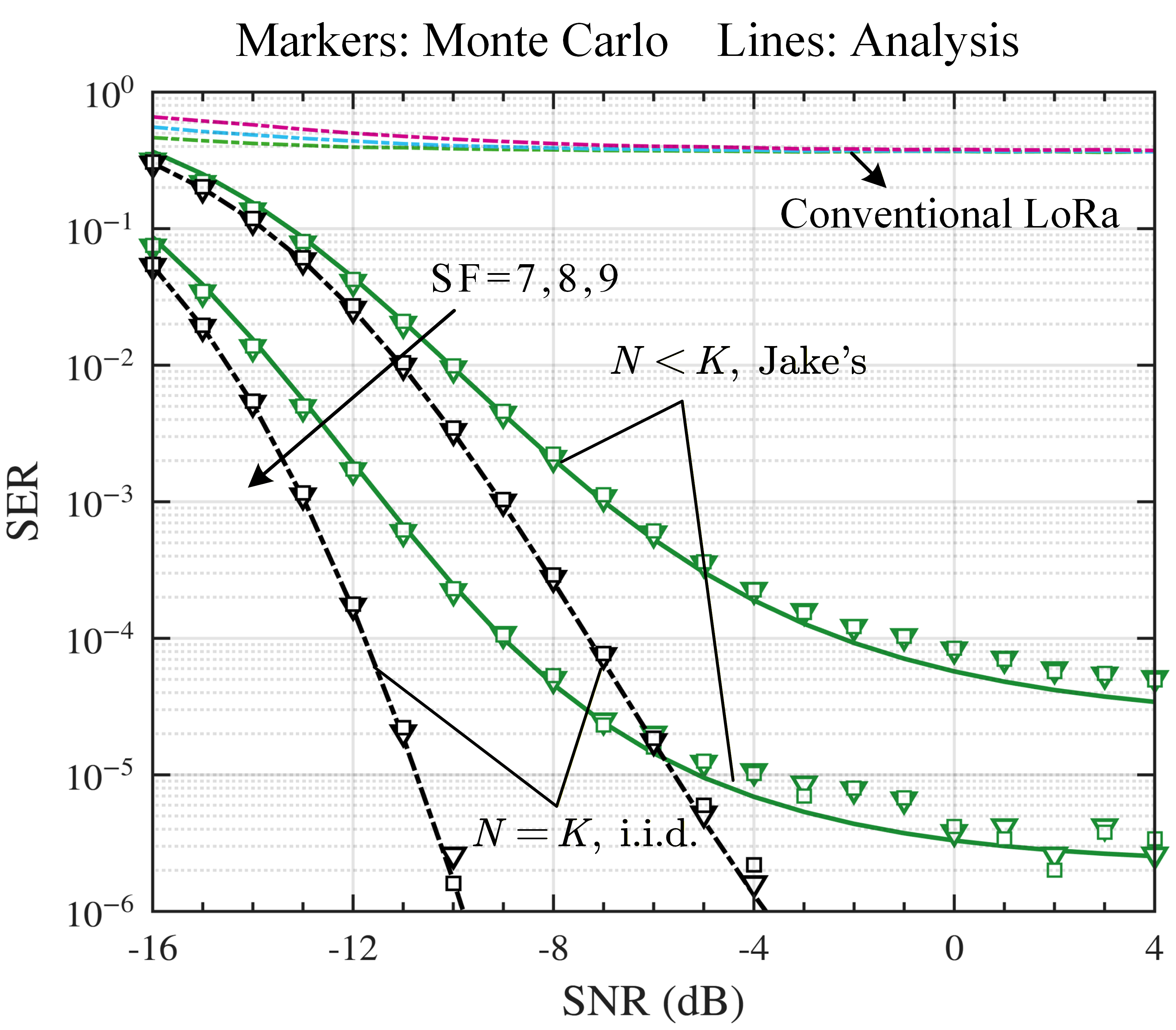}\label{Sim:3-3}}
	\caption{SER performance for different values of SNR. ($SF=\{7,8,9\}$, $N=\{2^6,2^7,2^8\}$, $U_1=5$, $W_1=W_2=4$, $E_\mathrm{s}=1$, $\Omega=1$)}
	\label{Sim:3}
\end{figure*}

In this section, we evaluate the performance of the proposed network via Monte Carlo simulations and analytical values. All parameters are specified in the figure captions. In the simulation, all users are placed at the same distance from the gateway to compare multiple-access performance under the same conditions.\footnote{This is consistent with typical LoRa network\cite{bibitem4-1,bibitem4-2}, where different SFs serve different concentric coverage rings. EDs within the same ring have similar distances to the gateway and thus comparable large-scale fading.} In addition, large-scale fading is not modeled separately since path loss can be incorporated into the SNR calculation \cite{bibitem4-3}. 
For example, with $P_\mathrm{t}=16$ dBm, $\Gamma=-16$ dB corresponds to a transmission distance of approximately $15.2$ km according to $L(d)=10\eta\log_{10}\left(4\pi f_\text{c}d/c\right)=P_\text{t}-\Gamma+174-10\log_{10}(\Delta)-NF$, where $\Delta=125$ kHz, $f_\text{c}=848$ MHz, $NF=6$ dB, $\eta=2.6$, and $c=3\times10^8$ m/s.
The previous and current interfering symbols are selected from four uniformly spaced values in $\mathcal{K}$ to maintain stable Monte Carlo results. Besides, each result is obtained from $5\times10^6$ Monte Carlo realizations.

In Fig.~\ref{Sim:1}, we study the relationship between SER and ED number. Conventional LoRa is used as a benchmark. Because it cannot distinguish the desired and interfering signals, the SER will exceed $10^{-1}$ for all ED values. Note that, as the interference is truncated within the observation window, two correlation peaks corresponding to the former and later symbols in \eqref{xj_piece} will separate the power. Thus, when $U_1=1$, the SER will be lower than $0.5$. Besides, a threshold of $\mathrm{SER}=10^{-4}$ is marked. The Monte Carlo results are also consistent with the simulation results. Based on the results, we have the following observation. First, compared with conventional LoRa, the {\it lora}-FAMA scheme can efficiently reduce interference among EDs. The SER can be reduced to below $10^{-4}$, thereby enabling simultaneous communication with multiple EDs, i.e., multi-user multiplexing. Second, three figures show the results for different $N$, where a smaller $N$ means the antenna can remain at one port for longer. In Fig.~\ref{Sim:1-3}, under $N=2^8$, a {\it lora}-FAMA with $\mathrm{SF}=9$ can serving more than $10$ EDs, each with a SER of $10^{-4}$. In Fig.~\ref{Sim:1-1}, a smaller $N=2^6$ can also serve more than $7$ EDs. Third, if we drop the physical size limitation, the ED number can be further increased. Larger $SF$ can support a greater number of EDs because the interference power will spread across more feasible frequency bins. Forth, for {\it lora}-FAMA, the enhancement from Fig.~\ref{Sim:1-2} to Fig.~\ref{Sim:1-3} is slower than that from Fig.~\ref{Sim:1-1} to Fig.~\ref{Sim:1-2}. Note that increasing $N$ increases the independence between undesired bins, which prevents the interfering ED from creating correlation peaks and thereby decreasing the SER. Then, the above results, with $N$ increasing from $2^7$ to $2^8$, will provide limited independence.

In Fig.~\ref{Sim:2}, we compare the SER performance under different $W_1$. First, as the length $W_1$ increases, the SER appears to decrease. In Fig.~\ref{Sim:2-1}, the {\it lora}-FAMA with $\mathrm{SF}=9$ can lower than $\mathrm{SER}=1^{-4}$, when $W_1>1.6$. Second, the SER does not consistently decrease as $W_1$ increases. For example, from $W_1=3.2$ to $3.6$, the SER slightly increased. This is because the overall correlation among undesired bins is not always reduced in Jake's model. Similar to the behavior of the Bessel function in \eqref{jake-model}, the sum of the absolute correlation coefficients generally decreases with $W_{1}$ while exhibiting local oscillations, with a correlation peak at $W_{1}=3.6$. Third, bounds for different SFs are given, and the SER performance will not fall below these bounds. For example, in Fig.~\ref{Sim:2-1}, the SER of the case $\mathrm{SF}=7$ will not be lower than $3.2\time10^{-4}$ at this configuration, while the SER of the case $\mathrm{SF}=7$ in Fig.~\ref{Sim:2-2} will approach this limitation as $W_1$ increases. The other two bounds for $\mathrm{SF}=\{8,9\}$ are beyond the range of the Monte Carlo method. Therefore, we only provide the analytical values.

In Fig.~\ref{Sim:3}, the SER performance under different SNRs is evaluated. First, as the SNR increases, the SER decreases and gradually approaches an error floor determined by the interfering EDs. Second, note that the length of the symbol $K$ mainly influences the signal power. Therefore, increasing $N$ shows no obvious decrease in the low-SNR region. Third, it is observed that, in the low SNR region, e.g., at $-12$~dB, the case that $N=K$, $\mathbf{R}=\mathbf{I}_K$ also shows significant decreasing in SNR, this is because independent channel components provide a stronger hardening effect in the desired bin, which reduces the variation of the combined channel power and improves the reliability of signal detection.

\section{Conclusion}
This paper proposed a novel FAMA framework to enable simultaneous multiuser transmissions in LoRa networks, referred to as {\it lora}-FAMA. Inspired by the fact that the received LoRa symbol over a variable channel can not be demodulated. We continuously allow the fluid antenna to traverse the aperture during communication. Then the proposed receiver coherently accumulates the desired signal while preventing the interfering signals from being demodulated. Both analytical and Monte Carlo results demonstrate the effectiveness of the proposed scheme. Under a target SER of $10^{-4}$, a fluid antenna with a normalized area of  $4\times4$ can serve more than $5$ EDs at $SF=8$ and more than $10$ EDs at $SF=9$. Without constraining the physical size of the gateway antenna, the number of concurrently supported users increases to 9 and 19, respectively.
Furthermore, to reduce the switching overhead, the FAS may remain at each port for multiple consecutive samples rather than switching at every sampling instant. For SF $=9$, even when the FAS dwells at each port for four consecutive samples, {\it lora}-FAMA can still support $7$ concurrent users. Despite some practical issues requiring further investigation, the proposed scheme shows significant potential in massive IoT connectivity.

\appendices
\section{Proof of {\it Lemma~\ref{l1}}}\label{app:a}
At $k=a_{0}$, the desired component can be rewritten as $\varpi_{\mathrm{s},0}[a_{0}]=\Re\left\{w_{\mathrm{s},0}[a_{0}]\right\}=\sqrt{E_{\mathrm{s}}}\mathbf{h}_{0}^{\mathrm{H}}\mathbf{h}_{0}/N$. Since $\mathbf{h}_{0}\sim\mathcal{CN}(\mathbf{0},\Omega\mathbf{R})$ and $\mathrm{tr}(\mathbf{R})=N$, while the interference and noise components are zero-mean, we obtain
\begin{equation}
	\begin{split}
		\mathrm{E}\left[\varpi_{0}[a_{0}]\right]&=\frac{\sqrt{E_{\mathrm{s}}}}{N}\mathrm{E}\left[\mathbf{h}_{0}^{\mathrm{H}}\mathbf{h}_{0}\right]=\sqrt{E_{\mathrm{s}}}\Omega.
	\end{split}
\end{equation}
Moreover, the variance of the desired-signal component is
\begin{equation}\label{var-s}
	\begin{split}
		\mathrm{Var}\big[\varpi_{\mathrm{s},0}[a_{0}]\big]=\frac{E_{\mathrm{s}}}{N^{2}}\mathrm{Var}\left[\mathbf{h}_{0}^{\mathrm{H}}\mathbf{h}_{0}\right]=\frac{E_{\mathrm{s}}\Omega^{2}}{N^{2}}\mathrm{tr}\left(\mathbf{R}^{2}\right).
	\end{split}
\end{equation}

For the $j$-th interfering ED, interference term at the desired bin is given by $w_{\mathrm{I},0,j}[a_{0}]=\sqrt{E_{\mathrm{s}}}\mathbf{h}_{0}^{\mathrm{H}}\mathbf{D}_{a_{0}\mid\tau,\boldsymbol{\upsilon}_{j}}\mathbf{h}_{j}$. Since $\mathbf{h}_{0}$ and $\mathbf{h}_{j}$ are independent and follow $\mathcal{CN}(\mathbf{0},\Omega\mathbf{R})$ and the interference term is zero-mean proper complex, we have 
\begin{equation}
	\begin{split}
		&\mathrm{Var}\left[\varpi_{\mathrm{I},0,j}[a_{0}]\right]=\frac{1}{2}\mathrm{E}_{\tau,\boldsymbol{\upsilon}_{j}}\left[\mathrm{E}\left[\left|w_{\mathrm{I},0,j}[a_{0}]\right|^{2}\mid\tau,\boldsymbol{\upsilon}_{j}\right]\right]\\
		&\quad=\frac{E_{\mathrm{s}}\Omega^{2}}{2QK^{3}}\sum_{\tau=0}^{Q_{1}}\sum_{\boldsymbol{\upsilon}\in\mathcal{K}^{3}}\mathrm{tr}\left(\mathbf{R}\mathbf{D}_{a_{0}\mid\tau,\boldsymbol{\upsilon}}\mathbf{R}\mathbf{D}_{a_{0}\mid\tau,\boldsymbol{\upsilon}}^{\mathrm{H}}\right)=\frac{E_{\mathrm{s}}\Omega^{2}}{2K}.
	\end{split}
\end{equation}
The last equality follows from
\begin{equation}
	\begin{split}
		&\frac{1}{K^3}\sum_{\boldsymbol{\upsilon}\in\mathcal{K}^3}\mathrm{tr}\left(\mathbf{R}\mathbf{D}_{a_0\mid\tau,\boldsymbol{\upsilon}}\mathbf{R}\mathbf{D}_{a_0\mid\tau,\boldsymbol{\upsilon}}^{\mathrm{H}}\right)\\
		&=\sum_{l=1}^{N}\sum_{m=1}^{N}
		\left|[\mathbf R]_{l,m}\right|^2
		\underbrace{\frac{1}{K^3}\sum_{\boldsymbol{\upsilon}\in\mathcal{K}^3}\left[\mathbf{D}_{k\mid\tau,\boldsymbol{\upsilon}}\right]_{m,m}\left[\mathbf{D}_{k\mid\tau,\boldsymbol{\upsilon}}\right]_{l,l}^{*}}_{\frac{1}{NK}\mathbf{1}_{\{l=m\}}}=\frac{1}{K},
	\end{split}
\end{equation}
and $[\mathbf{R}]_{l,l}=1$. Since the $U_{1}$ interfering EDs have identical variances and are mutually uncorrelated, the total interference variance is
\begin{equation}\label{var-I}
	\begin{split}
		\mathrm{Var}\left[\varpi_{\mathrm{I},0}[a_{0}]\right]=\frac{E_{\mathrm{s}}\Omega^{2}U_{1}}{2K}.
	\end{split}
\end{equation}

For the noise component, we have $w_{\mathrm{n},0}[a_{0}]=\mathbf{h}_{0}^{\mathrm{H}}\boldsymbol{\eta}$,
where $\boldsymbol{\eta}\sim\mathcal{CN}\left(\mathbf{0},{N_{0}}\mathbf{I}_{N}/{N}\right)$. It follows that
\begin{equation}\label{var-n}
\begin{split}
	\mathrm{Var}\left[\varpi_{\mathrm{n},0}[a_{0}]\right]=\frac{1}{2}\mathrm{E}\left[\left|w_{\mathrm{n},0}[a_{0}]\right|^{2}\right]=\frac{\Omega N_{0}}{2}.
\end{split}
\end{equation}
Moreover, we have
\begin{equation}\label{cross-cov}
	\begin{split}
		&\mathrm{Cov}\left[\varpi_{\mathrm{s},0}[a_{0}],\varpi_{\mathrm{I},0,j}[a_{0}]\right]\\
		&\:=\mathrm{E}_{\mathbf{h}_{0},\tau,\boldsymbol{\upsilon}_{j}}\left[\left(\varpi_{\mathrm{s},0}[a_{0}]-\mathrm{E}\left[\varpi_{\mathrm{s},0}[a_{0}]\right]\right)\right.\\
		&\:\qquad\qquad\left.\times\mathrm{E}_{\mathbf{h}_{j}}\left[\varpi_{\mathrm{I},0,j}[a_{0}]\mid\mathbf{h}_{0},\tau,\boldsymbol{\upsilon}_{j}\right]\right]\\
		&\:=\mathrm{E}_{\mathbf{h}_{0},\tau,\boldsymbol{\upsilon}_{j}}\left[\left(\varpi_{\mathrm{s},0}[a_{0}]-\mathrm{E}\left[\varpi_{\mathrm{s},0}[a_{0}]\right]\right)\right.\\
		&\:\qquad\qquad\left.\times\Re\left\{\sqrt{E_{\mathrm{s}}}\mathbf{h}_{0}^{\mathrm{H}}\mathbf{D}_{a_{0}\mid\tau,\boldsymbol{\upsilon}_{j}}\mathrm{E}_{\mathbf{h}_{j}}\left[\mathbf{h}_{j}\right]\right\}\right]=0.
	\end{split}
\end{equation}
where the last equality follows from $\mathrm{E}_{\mathbf{h}_{j}}\left[\mathbf{h}_{j}\right]=\mathbf{0}$. Then,
$\mathrm{Cov}\left(\varpi_{\mathrm{s},0}\big[a_{0}\big],\varpi_{\mathrm{n},0}\big[a_{0}\big]\right)$, $\mathrm{Cov}\left(\varpi_{\mathrm{I},0,j}\big[a_{0}\big],\varpi_{\mathrm{n},0}\big[a_{0}\big]\right)$, $\mathrm{Cov}\left(\varpi_{\mathrm{I},0,j_1}\big[a_{0}\big],\varpi_{\mathrm{I},0,j_2}\big[a_{0}\big]\right)$, where $j_1\neq{j}_2$, all equal to zero.

Finally, combining \eqref{var-s}, \eqref{var-I}, and \eqref{var-n} yields \eqref{var_wi}. This completes the proof.\qed

\section{Proof of {\it Lemma~\ref{l2}}}\label{app:b}

For $N=K$ and large $W$, we have $\mathbf{R}\to\mathbf{I}_{K}$ and each FAS port observes one sample. Moreover, $\mathbf{D}_{a_{0}\mid\tau,\boldsymbol{\upsilon}_{j}}\mathbf{D}_{a_{0}\mid\tau,\boldsymbol{\upsilon}_{j}}^{\mathrm{H}}=\mathbf{I}_{K}/K^{2}$, such that the distribution of $\varpi_{0}[a_{0}]$ is independent of $\tau$ and $\boldsymbol{\upsilon}_{j}$. Let $\zeta_{0}=\mathbf{h}_{0}^{\mathrm{H}}\mathbf{h}_{0}\sim\mathrm{Gamma}(K,\Omega)$. The desired-signal component is then given by $\varpi_{\mathrm{s},0}[a_{0}]=\sqrt{E_{\mathrm{s}}}\zeta_{0}/K=\theta\zeta_{0}$.

Conditioned on $\zeta_{0}$, the interference and noise terms are zero-mean Gaussian with variance $v\zeta_{0}$, where $v=E_{\mathrm{s}}\Omega U_{1}/(2K^{2})+N_{0}/(2K)$. Hence,
\begin{equation}
	\begin{split}
		\varpi_{0}[a_{0}]\mid\zeta_{0}\sim\mathcal{N}\left(\theta \zeta_{0},v\zeta_{0}\right),
	\end{split}
\end{equation}
where $\theta=\sqrt{E_{\mathrm{s}}}/K$. The CF of $\varpi_{0}[a_{0}]$ is therefore
\begin{equation}\label{w-cf}
	\begin{split}
		\Psi_{\varpi_{0}[a_{0}]}(t)&=\mathrm{E}_{\zeta_{0}}\left[\exp\left(\left(jt\theta-\frac{t^{2}v}{2}\right)\zeta_{0}\right)\right]\\
		&=\left[1-\Omega\left(jt\theta-\frac{t^{2}v}{2}\right)\right]^{-K}\\
		&=\left(1-jpt\right)^{-K}\left(1+jqt\right)^{-K}.
	\end{split}
\end{equation}
The last equality follows from $p-q=\theta\Omega$ and $pq=\Omega v/2$. Therefore, $\varpi_{0}[a_{0}]$ has the same distribution as $Y_{1}-Y_{2}$, where $Y_{1}\sim\mathrm{Gamma}(K,p)$ and $Y_{2}\sim\mathrm{Gamma}(K,q)$ are independent\cite[Chapter 22]{app:b-1}.

For $x\geq0$, the CDF can be written as
\begin{equation}\label{int_x>0}
	\begin{split}
		F_{\varpi_{0}[a_{0}]}(x)&=\Pr(Y_{1}-Y_{2}\leq x)=\int_{0}^{\infty}F_{Y_{1}}(x+y)f_{Y_{2}}(y)\mathrm{d}y\\
		&=\frac{1}{\Gamma(K)q^{K}}\int_{0}^{\infty}F_{Y_{1}}(x+y)y^{K-1}\exp\left(-\frac{y}{q}\right)\mathrm{d}y.
	\end{split}
\end{equation}
Note that $K$ is a integer, then we have $F_{Y_{1}}(x+y)=1-\exp\left(-(x+y)/p\right)\sum_{\kappa=0}^{K_1}(x+y)^{\kappa}/(\kappa!p^{\kappa})$. Substituting it into \eqref{int_x>0} and expanding $(x+y)^{\kappa}$ yield
\begin{equation}
	\begin{split}
		F_{\varpi_{0}[a_{0}]}(x)=&1-\frac{\exp\left(-x/p\right)}{\Gamma(K)q^{K}}\sum_{\kappa=0}^{K_1}\sum_{b=0}^{\kappa}\binom{\kappa}{b}\frac{x^{\kappa-b}}{\kappa!p^{\kappa}}\\
		&\times\int_{0}^{\infty}y^{K+b-1}\exp(-\rho y)\mathrm{d}y\\
		=&1-\exp\left(-\frac{x}{p}\right)\sum_{\kappa=0}^{K_1}\sum_{b=0}^{\kappa}A_{\kappa,b}\frac{x^{\kappa-b}}{p^{\kappa}q^{K}},
	\end{split}
\end{equation}
where $\int_{0}^{\infty}y^{K+b-1}\exp(-\rho y)\mathrm{d}y=\Gamma(K+b)/\rho^{K+b}$. Hence, \eqref{a_i-x>0} is obtained.

For $x<0$, let $u=-x>0$. The corresponding CDF satisfies $F_{\varpi_{0}[a_{0}]}(-u)=\Pr(Y_{1}+u\leq Y_{2})$, which gives
\begin{equation}
	\begin{split}
		F_{\varpi_{0}[a_{0}]}(-u)&=\frac{\exp\left(-u/q\right)}{\Gamma(K)p^{K}}\\
		&\times\sum_{\kappa=0}^{K_1}\frac{1}{\kappa!q^{\kappa}}\int_{0}^{\infty}(u+y)^{\kappa}y^{K-1}\exp(-\rho y)\mathrm{d}y\\
		&=\exp\left(-\frac{u}{q}\right)\sum_{\kappa=0}^{K_1}\sum_{b=0}^{\kappa}A_{\kappa,b}\frac{u^{\kappa-b}}{p^{K}q^{\kappa}}.
	\end{split}
\end{equation}
Substituting $u=-x$ yields \eqref{a_i-x<0}. This completes the proof.\qed

\section{Proof of {\it Lemma~\ref{l3}}}\label{app:c}

Given $\tau$, $\{\boldsymbol{\upsilon}_{j}\}_{j=1}^{U_1}$ and $\mathbf{h}_{0}$, the interference of the $j$-th ED is given by $w_{\mathrm{I},0,j}[k]=\sqrt{E_{\mathrm{s}}}\mathbf{h}_0^{\mathrm{H}}\mathbf{D}_{k\mid\tau,\boldsymbol{\upsilon}_{j}}\mathbf{h}_{j}$. Since $\mathbf{h}_{j}\sim\mathcal{CN}(\mathbf{0},\Omega\mathbf{R})$, its real part satisfies $\varpi_{\mathrm{I},0,j}[k]\mid\tau,\boldsymbol{\upsilon}_{j},\mathbf{h}_0\sim\mathcal{N}\big(0,\frac{E_{\mathrm{s}}\Omega}{2}\mathbf{h}_0^{\mathrm{H}}\mathbf{D}_{k\mid\tau,\boldsymbol{\upsilon}_{j}}\mathbf{R}\mathbf{D}_{k\mid\tau,\boldsymbol{\upsilon}_{j}}^{\mathrm{H}}\mathbf{h}_0\big)$.
Moreover, the noise term is given by $w_{\mathrm{n},0}[k]=\mathbf{h}_0^{\mathrm{H}}\boldsymbol{\eta}_{k}$, where $\boldsymbol{\eta}_{k}\sim\mathcal{CN}\left(\mathbf{0},N_{0}\mathbf{I}_{N}/N\right)$. Hence, $\varpi_{\mathrm{n},0}[k]\mid\mathbf{h}_0\sim\mathcal{N}\left(0,N_{0}\mathbf{h}_0^{\mathrm{H}}\mathbf{h}_0/(2N)\right)$.
Note that the interfering channels and noise are mutually independent, its conditional CF is\cite{app:c-1}
\begin{equation}\label{cf-u}
	\begin{split}
		&\Psi_{X_{k}\mid\tau,\{\boldsymbol{\upsilon}_{j}\}_{j=1}^{U_1},\mathbf{h}_0}(t)=\mathrm{E}\left[\exp\left(jtX_{k}\right)\mid\tau,\{\boldsymbol{\upsilon}_{j}\}_{j=1}^{U_1},\mathbf{h}_{0}\right]\\
		&=\exp\bigg[-\frac{t^{2}}{4}\mathbf{h}_0^{\mathrm{H}}\bigg(E_{\mathrm{s}}\Omega\sum_{j=1}^{U_{1}}\mathbf{D}_{k\mid\tau,\boldsymbol{\upsilon}_{j}}\mathbf{R}\mathbf{D}_{k\mid\tau,\boldsymbol{\upsilon}_{j}}^{\mathrm{H}}+\frac{N_{0}}{N}\mathbf{I}_{N}\bigg)\mathbf{h}_0\bigg].
	\end{split}
\end{equation}

Averaging \eqref{cf-u} over $\mathbf{h}_{0}\sim\mathcal{CN}(\mathbf{0},\Omega\mathbf{R})$ gives \eqref{cf-h} at the top of this page,
\begin{figure*}
\begin{equation}\label{cf-h}
	\begin{split}
		\Psi_{X_{k}\mid\tau,\{\boldsymbol{\upsilon}_{j}\}_{j=1}^{U_1}}(t)
		&=\frac{1}{\pi^{N}\Omega^{N}\det(\mathbf{R})}\int_{\mathbb{C}^{N}}\exp\Bigg[-\mathbf{u}^{\mathrm{H}}\bigg(\frac{1}{\Omega}\mathbf{R}^{-1}+\frac{t^{2}N_{0}}{4N}\mathbf{I}_{N}+\frac{t^{2}E_{\mathrm{s}}\Omega}{4}\sum_{j=1}^{U_{1}}\mathbf{D}_{k\mid\tau,\boldsymbol{\upsilon}_{j}}\mathbf{R}\mathbf{D}_{k\mid\tau,\boldsymbol{\upsilon}_{j}}^{\mathrm{H}}\bigg)\mathbf{u}\Bigg]\mathrm{d}^{2N}\mathbf{u}\\
		&=\det\bigg(\mathbf{M}_{\mathrm{n}}(t)+\frac{t^{2}}{4}\sum_{j=1}^{U_{1}}\mathbf{B}_{k\mid\tau,\boldsymbol{\upsilon}_{j}}\bigg)^{-1},
	\end{split}
\end{equation}
	\hrule
\end{figure*}
where $\mathbf{M}_{\mathrm{n}}(t)=\mathbf{I}_{N}+t^{2}N_{0}\Omega\mathbf{R}/(4N)$ and $\mathbf{B}_{k\mid\tau,\boldsymbol{\upsilon}_{j}}=E_{\mathrm{s}}\Omega^{2}\mathbf{R}^{\frac{1}{2}}\mathbf{D}_{k\mid\tau,\boldsymbol{\upsilon}_{j}}\mathbf{R}\mathbf{D}_{k\mid\tau,\boldsymbol{\upsilon}_{j}}^{\mathrm{H}}\mathbf{R}^{\frac{1}{2}}$. The last equality follows from the complex Gaussian integral and $\det(\mathbf{I}+\mathbf{A}\mathbf{C})=\det(\mathbf{I}+\mathbf{C}\mathbf{A})$.

Averaging \eqref{cf-h} over the discrete variables of the $U_{1}$ interfering EDs gives the exact CF conditioned on $\tau$ as
\begin{equation}\label{cf-e}
	\begin{split}
		\Psi_{X_{k}\mid\tau}(t)={}&\frac{1}{K^{3U_{1}}}\sum_{\boldsymbol{\upsilon}_{1}\in\mathcal{K}^{3}}\cdots\sum_{\boldsymbol{\upsilon}_{U_{1}}\in\mathcal{K}^{3}}\\
		&\det\bigg(\mathbf{M}_{\mathrm{n}}(t)+\frac{t^{2}}{4}\sum_{j=1}^{U_{1}}\mathbf{B}_{k\mid\tau,\boldsymbol{\upsilon}_{j}}\bigg)^{-1}.
	\end{split}
\end{equation}
Evaluating \eqref{cf-e} requires examining all $K^{3U_{1}}$ combinations. For notation compactness, define $\widetilde{\mathbf{B}}_{k\mid\tau,\boldsymbol{\upsilon}_{j}}(t)=\mathbf{M}_{\mathrm{n}}^{-\frac{1}{2}}(t)\mathbf{B}_{k\mid\tau,\boldsymbol{\upsilon}_{j}}\mathbf{M}_{\mathrm{n}}^{-\frac{1}{2}}(t)$. The determinant in \eqref{cf-e} is approximated as follows.

\begin{figure*}[!t]
	\normalsize
	\begin{subequations}\label{cf-apx}
		\begin{align}
			\Psi_{X_{k}\mid\tau}(t)
			&=\frac{\det\left(\mathbf{M}_{\mathrm{n}}(t)\right)^{-1}}{K^{3U_{1}}}\sum_{\boldsymbol{\upsilon}_{1}\in\mathcal{K}^{3}}\cdots\sum_{\boldsymbol{\upsilon}_{U_{1}}\in\mathcal{K}^{3}}\exp\Bigg[-\mathrm{tr}\log\bigg(\mathbf{I}_{N}+\frac{t^{2}}{4}\sum_{j=1}^{U_{1}}\widetilde{\mathbf{B}}_{k\mid\tau,\boldsymbol{\upsilon}_{j}}(t)\bigg)\Bigg]\label{cf-1}\\
			&\approx\frac{\det\left(\mathbf{M}_{\mathrm{n}}(t)\right)^{-1}}{K^{3U_{1}}}\sum_{\boldsymbol{\upsilon}_{1}\in\mathcal{K}^{3}}\cdots\sum_{\boldsymbol{\upsilon}_{U_{1}}\in\mathcal{K}^{3}}\exp\Bigg[-\frac{t^{2}}{4}\sum_{j=1}^{U_{1}}\mathrm{tr}\left(\widetilde{\mathbf{B}}_{k\mid\tau,\boldsymbol{\upsilon}_{j}}(t)\right)+\frac{1}{2}\left(\frac{t^{2}}{4}\right)^{2}\mathrm{tr}\bigg(\sum_{j=1}^{U_{1}}\widetilde{\mathbf{B}}_{k\mid\tau,\boldsymbol{\upsilon}_{j}}(t)\bigg)^{2}\Bigg]\label{cf-2}\\
			&=\frac{\det\left(\mathbf{M}_{\mathrm{n}}(t)\right)^{-1}}{K^{3U_{1}}}\sum_{\boldsymbol{\upsilon}_{1}\in\mathcal{K}^{3}}\cdots\sum_{\boldsymbol{\upsilon}_{U_{1}}\in\mathcal{K}^{3}}\exp\Bigg[-\frac{t^{2}}{4}\sum_{j=1}^{U_{1}}\mathrm{tr}\left(\widetilde{\mathbf{B}}_{k\mid\tau,\boldsymbol{\upsilon}_{j}}(t)\right)+\frac{1}{2}\left(\frac{t^{2}}{4}\right)^{2}\sum_{j=1}^{U_{1}}\mathrm{tr}\left(\widetilde{\mathbf{B}}_{k\mid\tau,\boldsymbol{\upsilon}_{j}}^{2}(t)\right)\notag\\
			&\qquad+\left(\frac{t^{2}}{4}\right)^{2}\sum_{1\leq j<r\leq U_{1}}\mathrm{tr}\left(\widetilde{\mathbf{B}}_{k\mid\tau,\boldsymbol{\upsilon}_{j}}(t)\widetilde{\mathbf{B}}_{k\mid\tau,\boldsymbol{\upsilon}_{r}}(t)\right)\Bigg]\label{cf-3}\\
			&\approx\frac{\det\left(\mathbf{M}_{\mathrm{n}}(t)\right)^{-1}}{K^{3U_{1}}}\sum_{\boldsymbol{\upsilon}_{1}\in\mathcal{K}^{3}}\cdots\sum_{\boldsymbol{\upsilon}_{U_{1}}\in\mathcal{K}^{3}}\prod_{j=1}^{U_{1}}\exp\Bigg[-\frac{t^{2}}{4}\mathrm{tr}\left(\widetilde{\mathbf{B}}_{k\mid\tau,\boldsymbol{\upsilon}_{j}}(t)\right)+\frac{1}{2}\left(\frac{t^{2}}{4}\right)^{2}\mathrm{tr}\left(\widetilde{\mathbf{B}}_{k\mid\tau,\boldsymbol{\upsilon}_{j}}^{2}(t)\right)\Bigg]\label{cf-4}\\
			&\approx\frac{\det\left(\mathbf{M}_{\mathrm{n}}(t)\right)^{-1}}{K^{3U_{1}}}\sum_{\boldsymbol{\upsilon}_{1}\in\mathcal{K}^{3}}\cdots\sum_{\boldsymbol{\upsilon}_{U_{1}}\in\mathcal{K}^{3}}\prod_{j=1}^{U_{1}}\det\left(\mathbf{I}_{N}+\frac{t^{2}}{4}\widetilde{\mathbf{B}}_{k\mid\tau,\boldsymbol{\upsilon}_{j}}(t)\right)^{-1}\label{cf-5}\\
			&=\det\left(\mathbf{M}_{\mathrm{n}}(t)\right)^{-1}\left[\frac{1}{K^{3}}\sum_{\boldsymbol{\upsilon}\in\mathcal{K}^{3}}\det\left(\mathbf{I}_{N}+\frac{t^{2}}{4}\widetilde{\mathbf{B}}_{k\mid\tau,\boldsymbol{\upsilon}}(t)\right)^{-1}\right]^{U_{1}}\label{cf-6}
		\end{align}
	\end{subequations}
	\hrule
\end{figure*}

\eqref{cf-2} follows from the second-order expansion of the matrix logarithm. \eqref{cf-3} expands the quadratic term, while \eqref{cf-4} neglects the second-order cross terms between different interfering EDs. \eqref{cf-5} follows by applying the same second-order approximation to each determinant. Since the discrete variables of different interfering EDs are independently and identically distributed, the multiple summations in \eqref{cf-5} can be factorized to obtain \eqref{cf-6}.  Finally, averaging \eqref{cf-6} over the uniformly distributed $\tau$ gives \eqref{cf-X}, which completes the proof.\qed

\section{Proof of {\it Lemma~\ref{l4}}}\label{app:d}
The asymptotically independent in the upper-tail of $k$ and $\tilde{k}$-th bins can be represented as \cite[Section 8.4]{app:d-1}
\begin{equation}
	\lim_{x\to\infty}\Pr(X_{\tilde{k}}>x\mid{X}_{k}>x)=0,	
\end{equation}
which is equal to
$\lim_{x\to\infty}\frac{\Pr\left(X_{\tilde{k}}>x,X_{k}>x\right)}{\bar{F}_{X_{k}}(x)}=0
$, where $\bar{F}_{X_{k}}(x)$ denotes the marginal complementary CDF (CCDF) of $k$-th undesired bin. Note that $\{X_{\tilde{k}}>x,X_k>x\}\subseteq\{X_{\tilde{k}}+X_k>2x\}$, therefore its sufficiently to prove
\begin{equation}
	\lim_{x\to\infty}\frac{\Pr\left(X_{k}+X_{\tilde{k}}>2x\right)}{\bar{F}_{X_k}(x)}=0.
\end{equation}
Then, we separately characterize the marginal distribution in the denominator and the joint distribution involved in the numerator.

We first consider the marginal distribution of a single bin. 

Given $\tau$ and $\{\boldsymbol{\upsilon}_{j}\}_{j=1}^{U_1}$ and based on the exact conditional CF in \eqref{cf-h}, define
$\mathbf{A}_{k\mid\tau,\{\boldsymbol{\upsilon}_{j}\}_{j=1}^{U_{1}}}=N_{0}\Omega\mathbf{R}/N+\sum_{j=1}^{U_{1}}\mathbf{B}_{k\mid\tau,\boldsymbol{\upsilon}_{j}}$.
Then, \eqref{cf-h} can be equivalently rewritten as
\begin{equation}\label{cf-xk}
	\begin{split}
		\Psi_{X_{k}\mid\tau,\{\boldsymbol{\upsilon}_{j}\}_{j=1}^{U_{1}}}(t)
		&=\det\bigg(\mathbf{M}_{\mathrm{n}}(t)+\frac{t^{2}}{4}\sum_{j=1}^{U_{1}}\mathbf{B}_{k\mid\tau,\boldsymbol{\upsilon}_{j}}\bigg)^{-1}\\
		&=\det\left(\mathbf{I}_{N}+\frac{t^{2}}{4}\mathbf{A}_{k\mid\tau,\{\boldsymbol{\upsilon}_{j}\}_{j=1}^{U_{1}}}\right)^{-1}.
	\end{split}
\end{equation}
Let $\lambda_{k,m\mid\tau,\{\boldsymbol{\upsilon}_{j}\}_{j=1}^{U_{1}}}$ denote the $m$-th eigenvalue of $\mathbf{A}_{k\mid\tau,\{\boldsymbol{\upsilon}_{j}\}_{j=1}^{U_{1}}}$. Applying the eigenvalue decomposition gives
\begin{equation}
	\begin{split}
		\Psi_{X_{k}\mid\tau,\{\boldsymbol{\upsilon}_{j}\}_{j=1}^{U_{1}}}(t)
		=\prod_{m=1}^{N}\bigg(1+\frac{\lambda_{k,m\mid\tau,\{\boldsymbol{\upsilon}_{j}\}_{j=1}^{U_{1}}}t^{2}}{4}\bigg)^{-1}.
	\end{split}
\end{equation}
Hence, $X_{k}$ has the same distribution as a sum of independent zero-mean Laplace RVs \cite[Chapter 26]{app:b-1}. Therefore, the upper tail is dominated by its largest eigenvalue. Defining $\lambda^{\star}_{k\mid\tau,\{\boldsymbol{\upsilon}_{j}\}_{j=1}^{U_{1}}}=\lambda_{\max}(\mathbf{A}_{k\mid\tau,\{\boldsymbol{\upsilon}_{j}\}_{j=1}^{U_{1}}})$. Since the upper-tail of a Laplace RV is exponential, we have
\begin{equation}\label{tail-x}
	\begin{split}
		\lim_{x\to\infty}\frac{1}{x}\log\Pr\left(X_{k}>x\mid\tau,\{\boldsymbol{\upsilon}_{j}\}^{U_{1}}_{j=1}\right)=\frac{-2}{\sqrt{\lambda^{\star}_{k\mid\tau,\{\boldsymbol{\upsilon}_{j}\}^{U_{1}}_{j=1}}}}.
	\end{split}
\end{equation}
Recall that $\mathbf{B}_{k\mid\tau,\boldsymbol{\upsilon}_{j}}
=
E_{\mathrm{s}}\Omega^{2}
\mathbf{R}^{\frac{1}{2}}
\mathbf{D}_{k\mid\tau,\boldsymbol{\upsilon}_{j}}
\mathbf{R}
\mathbf{D}_{k\mid\tau,\boldsymbol{\upsilon}_{j}}^{\mathrm{H}}
\mathbf{R}^{\frac{1}{2}}$, the diagonal entries of $\mathbf{D}_{k\mid\tau,\boldsymbol{\upsilon}_j}$ satisfy
\begin{equation}
	\begin{split}
		\big|[\mathbf{D}_{k\mid\tau,\boldsymbol{\upsilon}_j}]_{l,l}\big|&\leq\sum_{n\in\mathcal{S}_{l\mid\tau}}\left|x_{\boldsymbol{\upsilon}}[n]\right|\left|x_{k}[n]\right|=\frac{Q}{K}=\frac{1}{N},
	\end{split}
\end{equation}
which gives $\left\|\mathbf{D}_{k\mid\tau,\boldsymbol{\upsilon}_j}\right\|_{2}\leq1/N$ and we have $\left\|\mathbf{B}_{k\mid\tau,\boldsymbol{\upsilon}_j}\right\|_{2}\leq E_{\mathrm{s}}\Omega^{2}
\|\mathbf{R}^{\frac{1}{2}}\|_{2}^{2}
\left\|\mathbf{R}\right\|_{2}
\left\|\mathbf{D}_{k\mid\tau,\boldsymbol{\upsilon}_j}\right\|_{2}^{2}\leq\frac{E_{\mathrm{s}}\Omega^{2}}{N^{2}}
\left(\rho^{\star}\right)^{2}$, where $\rho^{\star}=\lambda_{\max}(\mathbf{R})$. We then have
\begin{equation}\label{lam-1}
	\begin{split}
		\lambda^\star_{k\mid\tau,\{\boldsymbol{\upsilon}_{j}\}^{U_{1}}_{j=1}}\leq\frac{N_{0}\Omega}{N}\rho^{\star}+\frac{E_{\mathrm{s}}\Omega^{2}U_{1}}{N^{2}}(\rho^{\star})^{2}=\Lambda_1.
	\end{split}
\end{equation}
The equality is attained when $d_{j}=0$ and $a_{\mathrm{p},j}=a_{\mathrm{q},j}=k$ for all interfering EDs, under which $\mathbf{D}_{k\mid\tau,\boldsymbol{\upsilon}_{j}}=\mathbf{I}_{N}/N$. Since the discrete state space is finite and the above state has a nonzero probability, then \eqref{tail-x} becomes
\begin{equation}\label{tail-u}
	\begin{split}
		\lim_{x\to\infty}\frac{1}{x}\log\bar{F}_{X_k}(x)=-\frac{2}{\sqrt{\Lambda_{1}}}.
	\end{split}
\end{equation}

We next consider two distinct undesired bins $k$ and $\tilde{k}$.

Similar to the above process, the CF of  $X_{\tilde{k}}+X_k$ have the same form as \eqref{cf-xk}, by substituting $\mathbf{A}_{k\mid\tau,\{\boldsymbol{\upsilon}_{j}\}_{j=1}^{U_{1}}}$ with 
\begin{equation}
	\begin{split}
		&\tilde{\mathbf{A}}_{\tilde{k},k\mid\tau,\{\boldsymbol{\upsilon}_{j}\}_{j=1}^{U_{1}}}=N_{0}\Omega\mathbf{R}^{\frac{1}{2}}\mathbf{T}_{\tilde{k},k\mid\tau}\mathbf{R}^{\frac{1}{2}}\\
		&\quad+E_{\mathrm{s}}\Omega^{2}\sum_{j=1}^{U_{1}}\mathbf{R}^{\frac{1}{2}}\mathbf{D}_{\tilde{k},k\mid\tau,\boldsymbol{\upsilon}_{j}}\mathbf{R}\mathbf{D}_{\tilde{k},k\mid\tau,\boldsymbol{\upsilon}_{j}}^{\mathrm{H}}\mathbf{R}^{\frac{1}{2}}.
	\end{split}
\end{equation}
where $\mathbf{D}_{\tilde{k},k\mid\tau,\boldsymbol{\upsilon}_j}
=
\mathbf{D}_{\tilde{k}\mid\tau,\boldsymbol{\upsilon}_j}
+
\mathbf{D}_{k\mid\tau,\boldsymbol{\upsilon}_j}$ and $\mathbf{T}_{\tilde{k},k\mid\tau}=\mathrm{diag}\big\{\sum_{n\in\mathcal{S}_{l\mid\tau}}\left|x_{\tilde{k}}[n]+x_{k}[n]\right|^{2}\big\}_{l=1}^{N}$. For $Q>1$, we have
\begin{equation}
	\begin{split}
		&\big\|\mathbf{D}_{\tilde{k},k\mid\tau,\boldsymbol{\upsilon}_j}\big\|_{2}
		=\max_{l}
		\Big|
		\sum_{n\in\mathcal{S}_{l\mid\tau}}
		x_{\boldsymbol{\upsilon}}[n]
		\left(x_{\tilde{k}}[n]+x_{k}[n]\right)^{*}
		\Big|\\
		&\quad\leq
		\max_{l}
		\sum_{n\in\mathcal{S}_{l\mid\tau}}
		\left|x_{\boldsymbol{\upsilon}}[n]\right|
		\left|x_{\tilde{k}}[n]+x_{k}[n]\right|
		<\frac{2Q}{K}
		=\frac{2}{N}.
	\end{split}
\end{equation}
and 
\begin{equation}
	\begin{split}
		\big\|\mathbf{T}_{\tilde{k},k\mid\tau}\big\|_{2}=\max_{l}
		\sum_{n\in\mathcal{S}_{l\mid\tau}}
		\left|x_{\tilde{k}}[n]+x_{k}[n]\right|^{2}<\frac{4Q}{K}
		=\frac{4}{N}.
	\end{split}
\end{equation}
 Defining $\tilde{\lambda}^{\star}_{\tilde{k},k\mid\tau,\{\boldsymbol{\upsilon}_{j}\}_{j=1}^{U_{1}}}$ $=\lambda_{\max}(\tilde{\mathbf{A}}_{\tilde{k},k\mid\tau,\{\boldsymbol{\upsilon}_{j}\}_{j=1}^{U_{1}}})$, its conditional upper-tail exponent is
\begin{equation}\label{tail-xx}
	\begin{split}
		\!\lim_{x\to\infty}\!\frac{1}{x}\!\log\Pr\!\big(X_{\tilde{k}}+X_{k}\!\!>\!2x\!\mid\!\tau,\{\boldsymbol{\upsilon}_{j}\}_{j=1}^{U_{1}}\big)\!=\!\frac{-4}{\sqrt{\tilde{\lambda}_{\tilde{k},k\mid\tau,\{\boldsymbol{\upsilon}_{j}\}_{j=1}^{U_{1}}}^{\star}\!\!}}.
	\end{split}
\end{equation}
Since all involved indices take values in finite sets, there exist constants $\mu<2$ and $\nu<4$ such that
$\|\mathbf{D}_{\tilde{k},k\mid\tau,\boldsymbol{\upsilon}_j}\|_{2}\leq\frac{\mu}{N}$ and $\|\mathbf{T}_{\tilde{k},k\mid\tau}\|_{2}\leq\frac{\nu}{N}$. Accordingly, we have
\begin{equation}\label{lambda12}
	\begin{split}
		\Lambda_{2}&\triangleq\max_{\substack{\tilde{k}<k,\tilde{k},k\in\mathcal{K}\setminus\{a_{0}\}\\\tau,\{\boldsymbol{\upsilon}_{j}\}_{j=1}^{U_{1}}}}\lambda_{\max}\Big(\tilde{\mathbf{A}}_{\tilde{k},k\mid\tau,\{\boldsymbol{\upsilon}_{j}\}_{j=1}^{U_{1}}}\Big)\\
		&\leq\frac{\nu N_{0}\Omega}{N}\rho^{\star}+\frac{\mu^{2}E_{\mathrm{s}}\Omega^{2}U_{1}}{N^{2}}\left(\rho^{\star}\right)^{2}<4\Lambda_{1}.
	\end{split}
\end{equation}
Combining \eqref{tail-u}, \eqref{tail-xx} and \eqref{lambda12}, we have
\begin{equation}
	\begin{split}
		\limsup_{x\to\infty}\frac{1}{x}\log\frac{\Pr\left(X_{\tilde{k}}+X_{k}>2x\right)}{\bar{F}_{X_k}(x)}\leq-\frac{4}{\sqrt{\Lambda_{2}}}+\frac{2}{\sqrt{\Lambda_{1}}}<0,
	\end{split}
\end{equation}
Note that $\limsup_{x\to\infty}\log(f(x))/x<0$ is sufficient for $\lim_{x\to\infty}f(x)=0$. This completes the proof.\qed

\section{Numerical Evaluation of Kendall's Tau}\label{app:e}

Similar to \eqref{cdf-exc}, directly calculating \eqref{kendall-tau} requires averaging is numerically intractable. Therefore, we first transform all variables into a uniform vector $\mathbf{U}\in[0,1]^{D}$ and then using the deterministic numerical integration, where
\begin{equation}
	D=\underbrace{2NU}_{\mathrm{complex}\:\mathrm{channels}}+\underbrace{2K}_{\mathrm{complex}\:\mathrm{noise}}+\underbrace{1}_{\tau}+\underbrace{3U_{1}}_{\{\boldsymbol{\upsilon}_{j}\}_{j=1}^{U_{1}}}.
\end{equation}
Based on the definition, Kendall's tau can then be written as \eqref{kendall},
\begin{figure*}[t]
	\begin{subequations}\label{kendall}
		\begin{align}
			\tau_{\mathrm{K}}
			={}&
			\mathrm{E}_{\mathbf{U},\widetilde{\mathbf{U}}}\left[
			\operatorname{sgn}\left(
			\left(\varpi_{0}[a_{0};\mathbf{U}]
			-\varpi_{0}[a_{0};\widetilde{\mathbf{U}}]\right)
			\left(X^{\star}(\mathbf{U})
			-X^{\star}(\widetilde{\mathbf{U}})\right)
			\right)
			\right]
			\label{kendall-a}\\
			={}&
			\int_{[0,1]^{D}}\int_{[0,1]^{D}}
			\operatorname{sgn}\left[
			\left(\varpi_{0}[a_{0};\mathbf{u}]
			-\varpi_{0}[a_{0};\widetilde{\mathbf{u}}]\right)
			\left(X^{\star}(\mathbf{u})
			-X^{\star}(\widetilde{\mathbf{u}})\right)
			\right]
			\mathrm{d}\mathbf{u}\mathrm{d}\widetilde{\mathbf{u}}
			\label{kendall-b}\\
			\approx{}&
			\frac{1}{2L_0(2L_0-1)}
			\sum_{\substack{\iota,\widetilde{\iota}=1\\
					\iota\neq\widetilde{\iota}}}^{L}
			\operatorname{sgn}\left[
			\left(\varpi_{0}^{(\iota)}[a_{0}]
			-\varpi_{0}^{(\widetilde{\iota})}[a_{0}]\right)
			\left(X_{\iota}^{\star}
			-X_{\widetilde{\iota}}^{\star}\right)
			\right]
			\label{kendall-c}\\
			={}&
			\frac{2}{2L_0(2L_0-1)}
			\sum_{1\leq\iota<\widetilde{\iota}\leq L}
			\operatorname{sgn}\left[
			\left(\varpi_{0}^{(\iota)}[a_{0}]
			-\varpi_{0}^{(\widetilde{\iota})}[a_{0}]\right)
			\left(X_{\iota}^{\star}
			-X_{\widetilde{\iota}}^{\star}\right)
			\right].
			\label{kendall-d}
		\end{align}
	\end{subequations}
	\hrule
\end{figure*}
where \eqref{kendall-a} follows from the uniform transformation of all underlying variables. \eqref{kendall-b} follows since $\mathbf{U}$ and $\widetilde{\mathbf{U}}$ are independently and uniformly distributed over $[0,1]^{D}$. \eqref{kendall-c} is obtained by deterministic numerical integration using $2L_0$ points generated from a Kronecker sequence, while \eqref{kendall-d} follows from the symmetry between $(\iota,\widetilde{\iota})$ and $(\widetilde{\iota},\iota)$. \eqref{kendall-d} can be numerically evaluated using Algorithm~\ref{alg:delta}. To avoid the exponential complexity of a tensor-product uniform grid, the integration points are generated from a Kronecker sequence, which provides a deterministic and approximately uniform coverage of the unit hypercube\cite{app:e-1}.
\balance


\begin{thebibliography}{1}
	\bibliographystyle{IEEEtran}
	\bibitem{bibitem1-1}
	M. ~Jouhari, N.~Saeed, M.-S.~Alouini and E.~M.~Amhoud, ``A Survey on Scalable LoRaWAN for Massive IoT: Recent Advances, Potentials, and Challenges,'' \emph{IEEE Commun. Surveys Tuts.}, vol. 25, no. 3, pp. 1841-1876, thirdquarter 2023.
	\bibitem{bibitem1-2}
	J.~P.~Shanmuga Sundaram, W.~Du and Z.~Zhao, ``A Survey on LoRa Networking: Research Problems, Current Solutions, and Open Issues,'' \emph{IEEE Commun. Surveys Tuts.}, vol. 22, no. 1, pp. 371-388, Firstquarter 2020.
	\bibitem{bibitem1-3}
	Z.~Liang, G.~Cai, J.~He, G.~Kaddoum, C.~Huang and M.~Debbah, ``RIS-Enabled Anti-Interference in LoRa Systems,'' \emph{IEEE Trans. Commun.}, vol. 72, no. 10, pp. 6599-6616, Oct. 2024.
	\bibitem{bibitem1-4}
	L.~Beltramelli, A.~Mahmood, P.~\"{O}sterberg and M.~Gidlund, ``LoRa Beyond ALOHA: An Investigation of Alternative Random Access Protocols,'' \emph{IEEE Trans. Ind. Inf.}, vol. 17, no. 5, pp. 3544-3554, May 2021.
	\bibitem{bibitem1-5}
	G.~Mu {\em et al.}, ``Coverage Performance Analysis of FAS-enhanced LoRa Wide Area Networks under both Co-SF and Inter-SF Interference,'' {\em IEEE Trans. Veh. Technol.}, early access, \url{DOI: 10.1109/TVT.2026.3696818}.
	\bibitem{bibitem1-6}
	S.~A.~Tegos, P.~D.~Diamantoulakis, A.~S.~Lioumpas, P.~G.~Sarigiannidis and G.~K.~Karagiannidis, ``Slotted ALOHA With NOMA for the Next Generation IoT,'' {\em IEEE Trans. Commun.}, vol. 68, no. 10, pp. 6289-6301, Oct. 2020.
	\bibitem{bibitem1-7}
	O.~Afisiadis, M.~Cotting, A.~Burg and A.~Balatsoukas-Stimming, ``On the Error Rate of the LoRa Modulation With Interference,'' {\em IEEE Trans. Wireless Commun.}, vol. 19, no. 2, pp. 1292-1304, Feb. 2020
	\bibitem{bibitem1-8}
	M.~Xhonneux, J.~Tapparel, A.~Balatsoukas-Stimming, A.~Burg and O.~Afisiadis, ``A Maximum-Likelihood-Based Two-User Receiver for LoRa Chirp Spread-Spectrum Modulation,'' {\em IEEE Internet Things J.}, vol. 9, no. 22, pp. 22993-23007, 15 Nov.15, 2022.
	\bibitem{bibitem1-9}
	M.~Heusse, C.~Caillouet and A.~Duda, ``Performance of Unslotted ALOHA With Capture and Multiple Collisions in LoRaWAN,'' {\em IEEE Internet Things J.}, vol. 10, no. 20, pp. 17824-17838, 15 Oct.15, 2023.
	\bibitem{bibitem1-10}
	O. Georgiou and U. Raza, ``Low Power Wide Area Network Analysis: Can LoRa Scale?,'' {\em IEEE Wireless Commun. Lett.}, vol. 6, no. 2, pp. 162-165, April 2017.
	\bibitem{bibitem1-11}
	Z.~Cao, Y.~Hu, H.~Jin and J.-B.~Seo, ``NOMA-Aided Pure ALOHA With Immediate Collision Resolution for Low-Power IoT Communications,'' {\em IEEE Internet Things J.,} vol. 12, no. 24, pp. 52658-52674, 15 Dec.15, 2025.
	\bibitem{bibitem1-12}
	Q. Yu, H. Wang, D. He and Z. Lu, ``Enhanced Group-Based Chirp Spread Spectrum Modulation: Design and Performance Analysis,'' {\em  IEEE Internet Things J.}, vol. 12, no. 5, pp. 5079-5092, 1 March1, 2025.
	\bibitem{bibitem1-13}
	A. W. Azim, R. Shubair and M. Chafii, ``Chirp Spread Spectrum-Based Waveform Design and Detection Mechanisms for LPWAN-Based IoT: A Survey,'' {\em IEEE Access}, vol. 12, pp. 24949-25017, 2024.
	\bibitem{bibitem1-14}
	I.~Bizon~Franco~de~Almeida, M.~Chafii, A.~Nimr and G.~Fettweis, ``Alternative Chirp Spread Spectrum Techniques for LPWANs,'' {\em IEEE Trans. Green Commun. Networking}, vol. 5, no. 4, pp. 1846-1855, Dec. 2021.
	\bibitem{bibitem1-15}
	R.~Bomfin, M.~Chafii, and G.~Fettweis. ``A novel modulation for IoT:
	PSK-LoRa'',  {\em In Proc. IEEE Veh. Technol. Conf.}, Apr. 2019, pp. 1-5.
	\bibitem{bibitem1-16}
	T.~Elshabrawy and J.~Robert, ``Interleaved Chirp Spreading LoRa-Based Modulation,'' {\em IEEE Internet Things J.}, vol. 6, no. 2, pp. 3855--3863, Apr. 2019.
	\bibitem{bibitem1-17}
	L.~Wang and M.~M.~Wang, ``Interleaved Chirp Spreading LoRa-Based Modulation With Continuous Phase,'' {\em IEEE Communi. Lett.}, vol. 29, no. 5, pp. 873-877, May 2025.
	\bibitem{bibitem1-18}
	A.~W.~Azim, A.~Bazzi, R.~Bomfin, R.~Shubair and M.~Chafii, ``Layered Chirp Spread Spectrum Modulations for LPWANs,'' {\em IEEE Trans. Commun.}, vol. 72, no. 3, pp. 1671-1687, March 2024.
	\bibitem{bibitem1-19}
	J.~Zhou, J.~Niu, B.~Li, T.~Wei and G.~Mu, ``RIS-Aided LoRa Uplink Transmission With Differential Index Modulation,'' {\em IEEE Trans. Commun.}, vol. 74, pp. 5893-5906, 2026.
	\bibitem{bibitem1-20}
	X.~Yu, X.~Cai, W.~Xu, H.~Sun and L.~Wang, ``Differential Phase Shift Keying-Aided Multimode Chirp Spread Spectrum Modulation,'' {\em IEEE Wireless Commun. Lett.}, vol. 13, no. 2, pp. 298--302, Feb. 2024.
	\bibitem{bibitem1-21}
	G.~Baruffa and L.~Rugini, ``Performance of LoRa-based schemes and
	quadrature chirp index modulation,''{\em IEEE Internet Things J.}, vol. 9, no. 10, pp. 7759–7772, May 2022.
	\bibitem{bibitem1-22}
	M.~A.~Ben~Temim, G.~Ferré, B.~Laporte-Fauret, D.~Dallet, B.~Minger and L.~Fuché, ``An Enhanced Receiver to Decode Superposed LoRa-Like Signals,'' {\em IEEE Internet Things J.}, vol. 7, no. 8, pp. 7419-7431, Aug. 2020.
	\bibitem{bibitem1-23}
	J.-B.~Seo, Y.~Hu, H.~Jin and S.~De, ``ALOHA With SIC-Aided Collision Resolution,'' {\em in IEEE Internet Things J.}, vol. 12, no. 8, pp. 10194-10209, 15 April15, 2025.
	\bibitem{bibitem1-24}
	D.-T.~Ta, K.~Khawam, S.~Lahoud, C.~Adjih and S.~Martin, ``LoRa-MAB: Toward an Intelligent Resource Allocation Approach for LoRaWAN,'' in {\em Proc. IEEE Global Commun. Conf. (GLOBECOM)}, Waikoloa, HI, USA, Dec. 2019, pp. 1-6.
	\bibitem{bibitem1-25}
	K.-K.~Wong {\em et al.}, ``Fluid antenna systems,'' \emph{IEEE Trans. Wireless Commun.}, vol.~20, no.~3, pp. 1950--1962, Mar. 2021.
	\bibitem{bibitem1-26}
	T. Wu {\em et al.}, ``Fluid antenna systems enabling 6G: Principles, applications, and research directions,'' \emph{IEEE Wireless Commun.}, , early access, \url{ 10.1109/MWC.2025.3629597.}
	\bibitem{bibitem1-27}
	X.~Pi, L.~Zhu, H.~Mao, Z.~Xiao, X.-G.~Xia and R.~Zhang, ``6D Movable Antenna Enhanced Multi-Access Point Coordination via Position and Orientation Optimization,'' \emph{IEEE Trans. Wireless Commun.}, vol. 25, pp. 915-930, 2026.
	\bibitem{bibitem1-28}
	Z. Xiao {\it et al.}, ``Movable Antenna Aided NOMA: Joint Antenna Positioning, Precoding, and Decoding Design,'' \emph{IEEE Trans. Wireless Commun.}, vol. 25, pp. 4595-4612, 2026
	\bibitem{bibitem1-29}
	K.-K.~Wong and K.-F.~Tong, ``Fluid antenna multiple access,'' \emph{IEEE Trans. Wireless Commun.}, vol.~21, no.~7, pp. 4801--4815, Jul. 2022.
	\bibitem{bibitem1-30}
	N.~Waqar, K.-K.~Wong, C.-B.~Chae, R.~Murch, ``Fast Fluid Antenna Multiple Access,'' \emph{arXiv preprint}, \url{arXiv:2605.23642}, May. 2026.
	\bibitem{bibitem1-31}
	K.-K.~Wong, D.~Morales-Jimenez, K.-F.~Tong and C.-B.~Chae, ``Slow Fluid Antenna Multiple Access,'' \emph{IEEE Transactions on Communications}, vol. 71, no. 5, pp. 2831-2846, May 2023.
	\bibitem{bibitem1-32}
	K.-K.~Wong, K.-F.~Tong, Y.~Chen, Y.~Zhang and C.-B.~Chae, ``Opportunistic Fluid Antenna Multiple Access,'' {\em IEEE Trans. Wireless Commun.}, vol. 22, no. 11, pp. 7819-7833, Nov. 2023.
	\bibitem{bibitem1-33}
	K.-K.~Wong, C.-B.~Chae and K.-F.~Tong, ``Compact Ultra Massive Antenna Array: A Simple Open-Loop Massive Connectivity Scheme,'' {\em IEEE Trans. Wireless Commun.}, vol. 23, no. 6, pp. 6279-6294, June 2024.
	\bibitem{bibitem1-34}
	N.~Waqar, K.-K.~Wong, C.-B.~Chae and R.~Murch, ``Turbocharging Fluid Antenna Multiple Access,'' {\em IEEE Trans. Wireless Commun.}, vol. 25, pp. 4038-4052, 2026.
	\bibitem{bibitem1-35}
	T.~Wu, J.~Zheng, X.~Lai, M.~Elkashlan, H.~Shin and N.~Al-Dhahir, ``Fluid Antennas Meet Intelligent Surfaces: Security Analysis of NOMA Systems Under Hardware Impairments,'' {\em IEEE Trans. Cognit. Commun. Netw.}, vol. 12, pp. 7775-7788, 2026.
	\bibitem{bibitem1-36}
	T.~Wu {\it et al.}, ``Unleashing More Potential From FAS: A Framework of FAS-CoNOMA Systems,'' {\em IEEE Trans. Commun.}, vol. 74, pp. 4820-4836, 2026.
	\bibitem{bibitem1-37}
	H.~Hong {\it et al.}, ``Downlink OFDM-FAMA in 5G-NR Systems,'' {\em IEEE Trans. Wireless Commun.}, vol. 24, no. 12, pp. 10116-10132, Dec. 2025.
	
	\bibitem{bibitem2-1}
	A.~Maleki, H.~H.~Nguyen, E.~Bedeer and R.~Barton ``A Tutorial on Chirp Spread Spectrum Modulation for LoRaWAN: Basics and Key Advances,'' {\em IEEE Open J. Commun. Soc.}, vol. 5, pp. 4578-4612, 2024.
	\bibitem{bibitem2-2}
	R.~Marini, K.~Mikhaylov, G.~Pasolini and C.~Buratti, ``Low-Power Wide-Area Networks: Comparison of LoRaWAN and NB-IoT Performance,'' {\em IEEE Internet Things J.}, vol. 9, no. 21, pp. 21051-21063, 1 Nov.1, 2022.
	\bibitem{bibitem2-3}
	L.~Amichi, M.~Kaneko, E.~H.~Fukuda, N.~El~Rachkidy and A.~Guitton, ``Joint Allocation Strategies of Power and Spreading Factors With Imperfect Orthogonality in LoRa Networks,'' {\em IEEE Trans. Commun.}, vol. 68, no. 6, pp. 3750-3765, June 2020.
	\bibitem{bibitem2-4}
	S.~Yu, X.~Xia, Z.~Zhang, N.~Hou and Y.~Zheng, ``FDLoRa: Scaling Downlink Concurrent Transmissions With Full-Duplex LoRa Gateways,''{\em IEEE Trans. Mob. Comput.}, vol. 24, no. 10, pp. 10668-10682, Oct. 2025.
	\bibitem{bibitem2-5}
	{\it TS001-1.0.4 LoRaWAN L2 1.0.4 Specification}, LoRa Alliance, Fremont, CA, USA, Oct. 2020.
	\bibitem{bibitem2-6}
	P.~Gkotsiopoulos, D.~Zorbas and C.~Douligeris, ``Performance Determinants in LoRa Networks: A Literature Review,'' {\em IEEE Commun. Surveys Tuts.}, vol. 23, no. 3, pp. 1721-1758, thirdquarter 2021.
	\bibitem{bibitem2-7}
	M.~Chiani, and A.~Elzanaty ``On the LoRa Modulation for IoT: Waveform Properties and Spectral Analysis,'' {\em IEEE Internet Things J.}, vol. 6, no. 5, pp. 8463–8470, Oct. 2019.
	\bibitem{bibitem2-8}
	W.~K.~New {\it et al.}, ``A Tutorial on Fluid Antenna System for 6G Networks: Encompassing Communication Theory, Optimization Methods and Hardware Designs,'' {\em IEEE Commun. Surveys Tuts.}, vol. 27, no. 4, pp. 2325-2377, Aug. 2025.
	\bibitem{bibitem2-9}
	J. Costantine, Y. Tawk, S. E. Barbin and C. G. Christodoulou, "Reconfigurable Antennas: Design and Applications," in Proceedings of the IEEE, vol. 103, no. 3, pp. 424-437, March 2015
	\bibitem{bibitem2-10}
	T.~Elshabrawy and J.~Robert, ``Closed-form approximation of LoRa modulation BER performance,'' {\em IEEE Commun. Lett.}, vol.~22, no.~9, pp. 1778--1781, Sep. 2018.
	
	\bibitem{bibitem3-1}
	F.~Rostami~Ghadi, K.-K.~Wong, F.~Javier~López-Martínez, C.-B.~Chae, K.-F.~Tong and Y.~Zhang, ``A Gaussian Copula Approach to the Performance Analysis of Fluid Antenna Systems,'' {\em IEEE Trans. Wireless Commun.}, vol. 23, no. 11, pp. 17573-17585, Nov. 2024.
	\bibitem{bibitem3-2}
	Y.~Hou {\em et al.}, ``A Copula-Based Approach to Performance Analysis of Fluid Antenna System With Multiple Fixed Transmit Antennas,'' {\em IEEE Wireless Commun. Lett.}, vol. 13, no. 2, pp. 501-504, Feb. 2024.
	
	\bibitem{bibitem4-1}
	L.~Amichi, M.~Kaneko, E.~H.~Fukuda, N.~El~Rachkidy and A.~Guitton, ``Joint Allocation Strategies of Power and Spreading Factors With Imperfect Orthogonality in LoRa Networks,'' {\em IEEE Trans. Commun.}, vol. 68, no. 6, pp. 3750-3765, June 2020.
	\bibitem{bibitem4-2}
	O. Georgiou and U. Raza, ``Low Power Wide Area Network Analysis: Can LoRa Scale?,'' {\em IEEE Wireless Commun. Lett.}, vol. 6, no. 2, pp. 162-165, April 2017.
	\bibitem{bibitem4-3}
	G.~Mu {\em et al.}, ``On Performance of LoRa Fluid Antenna Systems,'' {\em IEEE Trans. Wireless Commun.}, vol. 25, pp. 4869-4886, 2026.
	
	\bibitem{app:b-1}
	C.~Forbes, M.~Evans, N.~Hastings, and B.~Peacock, \emph{Statistical Distributions}, 4th ed. Hoboken, NJ, USA: Wiley, 2011.
	
	\bibitem{app:c-1}
	B.~Picinbono, `Second-order complex random vectors and normal distributions,'' {\em IEEE Trans. Signal Process.}, vol. 44, no. 10, pp. 2637-2640, Oct. 1996.
	
	\bibitem{app:d-1}
	S.~Coles, J.~Bawa, L.~Trenner, \emph{et al.}, ``An Introduction to Statistical Modeling of Extreme Values,'' Springer, London, U.K., 2001.
	
	\bibitem{app:e-1}
	H.~Niederreiter, \emph{Random Number Generation and Quasi-Monte Carlo Methods}. Philadelphia, PA, USA: Society for Industrial and Applied Mathematics, 1992.
\end{thebibliography}
\end{document}